\newtheorem{theorem}{Theorem}
\newcommand{\bE}{\mathbb{E}}
\newcommand{\tp}{^{\intercal}}
\begin{document}
%
\title{Adaptive Learning with Artificial Barriers Yielding Nash Equilibria in General Games}
%
%
%

\author{Ismail~Hassan, 
    and~B.John~Oommen,~\IEEEmembership{Life~Fellow,~IEEE},
     Anis~Yazidi,~\IEEEmembership{Senior Member,~IEEE}
\thanks{Ismail Hassan, Author's status: {\it Assistant Professor}.
This author can be contacted at: Oslo Metropolitan University, Department of Computer Science, Pilestredet 35, Oslo, Norway.  E-mail:
{\tt ismail@oslomet.no}.},
\thanks{B.~John~Oommen, {\it Chancellor's Professor} ; {\it Life Fellow:
IEEE} and {\it Fellow: IAPR}. The work of this author was partially supported by NSERC, the Natural Sciences and Engineering Council of Canada. This author can be contacted at:
School of Computer Science, Carleton University, Ottawa, Canada :
K1S 5B6. This author is also an {\em Adjunct Professor} with the
University of Agder in Grimstad, Norway. E-mail address: \texttt{oommen@scs.carleton.ca}.}, 
\thanks{Anis Yazidi, Author's status: {\it Professor}.
This author can be contacted at: Oslo Metropolitan University, Department of Computer Science, Pilestredet 35, Oslo, Norway.  E-mail:
{\tt anis.yazidi@oslomet.no}.}
}

\maketitle

\begin{abstract}
Artificial barriers in Learning Automata (LA) is a powerful and yet under-explored concept although it was first proposed in the 1980s \cite{Oommen1986}. Introducing  artificial non-absorbing barriers makes the LA schemes resilient to being trapped in absorbing barriers, a phenomenon which is often referred to as lock in probability leading to an exclusive choice of one action after convergence.  Within the field of LA and reinforcement learning in general, there is a sacristy
of theoretical works and applications of schemes with artificial barriers. In this paper, we devise a LA with artificial barriers for solving a general form of stochastic bimatrix game. Classical LA  systems  possess properties of absorbing barriers and they are a powerful tool in game theory and were shown to converge to game's of Nash equilibrium under limited information \cite{sastry1994decentralized}. However, the stream of works in LA for solving game theoretical problems can merely solve the case where the Saddle Point of the game exists in a pure strategy and fail to reach mixed Nash equilibrium when no Saddle Point exists for a pure strategy.
In this paper, by resorting to the powerful concept of artificial barriers, we suggest a LA that converges to an optimal mixed Nash equilibrium even though there may be no Saddle Point when a pure strategy is invoked.
Our deployed scheme is of Linear Reward-Inaction ($L_{R-I}$) flavor which is originally an absorbing LA scheme, however, we render it non-absorbing by introducing artificial barriers in an elegant and natural manner, in the sense that that the well-known legacy $L_{R-I}$ scheme can be seen as an instance of our proposed algorithm for a particular choice of the barrier. Furthermore, we present an $S$ Learning version of our LA with absorbing barriers that is able to handle $S$-Learning environment in which the feedback is continuous and not binary as in the case of the $L_{R-I}$. 

Reward-$\epsilon$Penalty ($L_{R-\epsilon P}$) scheme proposed by Lakshmivarahan and Narendra \cite{lakshmivarahan1982learning}  almost four decades ago, is the only LA scheme that was shown to converge to the optimal mixed Nash equilibrium when no Saddle Point exists in pure strategy, and the proofs were limited to only zero-sum games.
In this paper, we tackle a general game as opposed to the particular case of zero-sum games proposed by Lakshmivarahan and Narendra in \cite{lakshmivarahan1982learning}, and provide a novel scheme and proofs characterizing its behavior. Furthermore, we provide experimental results that are in line with our theoretical findings.



\end{abstract}

\begin{IEEEkeywords}
Learning Automata (LA), Games with Incomplete Information, LA with Artificial Barriers.
\end{IEEEkeywords}

%
\IEEEpeerreviewmaketitle

\section{Introduction}
\label{sec:Introduction}
Narendra and Thathachar first presented the term Learning Automata (LA) in their 1974 survey \cite{narendra74}. LA consists of an adaptive learning agent interacting with a stochastic environment with incomplete information. Lacking prior knowledge, LA attempts to determine the optimal action to take by first choosing an action randomly and then updating the action probabilities based on the reward/penalty input that the LA receives from the environment. This process is repeated until the optimal action is, finally, achieved. The LA update process can be described by the learning loop shown in Figure \ref{fig:LAloop}.

Formally, a LA is defined by the mean of a quintuple $\langle {A,
B, Q, F(.,.), G(.)} \rangle$, where the elements of the quintuple are defined term by term  as:

\begin{enumerate}
\item   $A=\{ {\alpha}_1, {\alpha}_2, \ldots, {\alpha}_r\}$ gives the set of actions available to the LA, while ${\alpha}(t)$
    is the action selected at time instant $t$ by the LA. Note that the LA selects one action at a time, and the selection is sequential.

\item   $B = \{{\beta}_1, {\beta}_2, \ldots, {\beta}_m\}$ denotes the set of possible input values that the LA can receive. ${\beta}(t)$ denotes the input at time instant $t$ which is a form of feedback.

\item   $Q=\{q_1, q_2, \ldots, q_s\}$ represents the states of the LA where $Q(t)$ is the state at time instant
$t$.

\item   $F(.,.): Q \times B \mapsto  Q$ is the transition function at time $t$, such that, $q(t+1)=F[q(t),
{\beta}(t)]$. In simple terms, $F(.,.)$  returns the next state of the LA at time instant $t+1$ given the current state and the input from the environment both at time $t$ using either a
deterministic or a stochastic mapping.

\item   $G(.)$ defines {\it output function},  it represents  a mapping $G: Q \mapsto  A$ which determines the action of the LA as a function of the state. 

\end{enumerate}

The Environment, $E$ is characterized by :

\begin{itemize}
    \item $C=\{c_1 , c_2 ,\ldots, c_r\}$ is a set of penalty
probabilities, where $c_i \in C$ corresponds to the penalty of action ${\alpha}_i$.
\end{itemize}

\begin{figure}[htp!]
\centering
\includegraphics [scale=1] {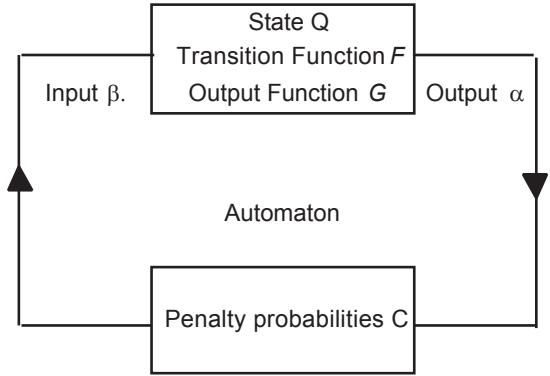}
\caption{LA interacting with the environment.}
\label{fig:LAloop}
\end{figure}

\noindent
{\bf Learning Automata}: Research into LA over the past four decades is extensive, leading to the proposal of various types throughout the years. LA are mainly characterized as being Fixed Structure Learning Automata (FSLA) or Variable Structure Learning Automata (VSLA). In FSLA, the probability of the transition from one state to another state is fixed and the action probability of any action in any state is also fixed. Early research into LA centered around FSLA. Early pioneers in LA such as Tsetlin, Krylov, and Krinsky \cite{Tse73} proposed several examples  of this types of automata. The research into LA moved gradually towards VSLA. Introduced by Varshavskii and Vorontsova in the early 1960's \cite{VV63}, VSLA has transition and output functions that evolve as the learning process continues \cite{Oommen1986}. The state transitions or the action probabilities are updated at every time step.

\noindent
{\bf Continuous or Discretized}: VSLA can also be defined as being Continuous or Discretized depending on the values that the action probabilities can take. In continuous LA, the action probabilities can take any value in the interval $[0,1]$. The drawback with continuous LA is that they {\em approach} a goal but never {\em reach} there and they have a slow rate of convergence. The concept of discretization was introduced in the 1980s to address the shortcomings of continuous LA. The proposed mitigation increased LA speeds of convergence \cite{ Lanctot1992, Oommen1990a} by permitting an action probability that was close enough to zero or unity, jump to that end point in a single step. The method employed by the authors constrained the action selection probability to be one of a finite number of values in the interval $[0,1]$. 

\noindent
{\bf Ergodic or Absorbing}: Depending on their Markovian properties, LA can further be classified as either ergodic or equipped with characteristics of absorbing barriers. In an ergodic LA system, the final steady state does not depend on the initial state. In contrast to LA with absorbing barriers, the steady state depends on the initial state and when the LA converges, it gets locked into an absorbing state.

Absorbing barrier VSLA are preferred in static environments, while ergodic VSLA are suitable for dynamic environments.

\noindent
{\bf Environment types}:
The feedback from the LA  ${\beta}(t)$ is  a scalar that falls in the interval $[0,1]$. If the feedback is binary, meaning 0 or 1, then the Environment is called P-type. Whenever the feedback is a discrete values, we call the environment $Q-$type. In the third case where the feedback is any real number in the interval $[0,1]$, we call the environment as $S-$type. Traditionally the schemes that handle $P$-type environment operate with stochastic environments where the feedback is a a realization from a Bernoulli process, or at least in the proofs since the penalty and rewards do not really need in general to obey some statistical laws and can be for instance generated by an adversary. The scheme that handle $P$ types such as the famous  $L_{R-I}$ and a large class of fixed structured LA, do not automatically handle $S$-types and $Q-$types. Mason presented in \cite{mason1973optimal} an LA scheme to handle $S$-learning environment. In this paper, we generalize our scheme to the $S$-type environment and provide the corresponding update scheme.

\noindent
{\bf LA with Artificially Absorbing Barriers}: LA with artificially absorbing barrier were introduced in the 1980s. In \cite{Oommen1986}, Oommen  turned a discretized ergodic scheme into an absorbing one by introducing an artificially absorbing barrier that forces the scheme to converge to one of the absorbing barriers. Such a modification led to the advent of new LA families with previously unknown behavior.
For instance, the $ADL_{R-P}$ and $ADL_{I-P}$ are absorbing schemes that are the result of the introducing absorbing barriers to their counterparts original corresponding schemes. Those absorbing scheme  were shown to be $\epsilon$-optimal in all random environments.

\noindent
{\bf Applications of LA}: LA had been utilized in many applications over the years. Recent applications of LA include achieving fair load balancing based on two-time-scale separation paradigm \cite{ismail2020}, detection of malicious social bots \cite{ranjan2020}, secure socket layer certificate verification \cite{krishna2014secure}, a protocol for intrusion detection \cite{fathinavid2012}, efficient decision making mechanism for stochastic nonlinear resource allocation \cite{yazidi2019two}, link prediction in stochastic social networks \cite{moradabadi2018link}, user behavior analysis-based smart energy management for webpage ranking \cite{makkar2018user}, and resource selection in computational grids \cite{enami2019resource}, to mention a few.

\noindent
{\bf LA Applied to Game Theory}: 
Studies on strategic games with LA
were focused mainly on traditional $L_{R-I}$ which is desirable to use as it can yield Nash equilibrium in pure strategies \cite{sastry1994decentralized}. Although other ergodic schemes such as $L_{R-P}$  were used in games \cite{viswanathan1974games} with limited information, they did not gain popularity at least when it comes to applications due to their inability to converge to Nash equilibrium.
LA has found numerous applications in game theoretical applications such as sensor fusion without knowledge of the ground truth
\cite{yazidi2020solving}, 
for distributed power control in wireless networks and more particurly NOMA
\cite{rauniyar2020reinforcement},
optimization of cooperative tasks
\cite{zhang2020learning}, 
for content placement in cooperative caching \cite{yang2020learning},
congestion control in Internet of Things
\cite{gheisari2019cccla},
 QoS satisfaction in autonomous mobile edge computing
\cite{apostolopoulos2018game}, opportunistic spectrum access 
\cite{cao2018distributed}
  scheduling domestic shiftable loads in smart grids
 \cite{thapa2017learning},
     anti-jamming channel selection algorithm for interference mitigation
\cite{jia2017distributed},
relay selection  in vehicular ad-hoc networks
  \cite{tian2017self} etc.

\noindent
{\bf Objective and Contribution of this paper}: In this paper, we propose an algorithm addressing bimatrix games which is a more general version of the zero-sum game treated in \cite{lakshmivarahan1982learning}.

First we consider  a stochastic game where the outcomes are binary in our case, which are either a reward or a penalty. The reward probabilities are given by corresponding payoff matrix of each player. The game we treat is of limited information which is a flavor of games often treated in LA. In such game, each player only observes the outcome of his action in the form of a reward or penalty without observing the action chosen by the other player. The player might not be even aware that he is playing against an opponent player.
 By virtue of the design principles of our scheme, at each round of the repetitive game, the players revise their strategies upon receiving a reward while maintain their strategies unchanged upon receiving a penalty. This is in concordance with the Linear Reward-Inaction, $L_{R-I}$ paradigm. 
 Please note that this is radically different from the paradigm by Lakshmivarahan \cite{lakshmivarahan1982learning} where the players always revise their strategies at each round, where the magnitude of the adjustment in the probabilities of the action depend only on whether a reward or penalty is received at every time instant.
 
Furthermore, we provide an extension of our scheme to handle $S$-learning environment where the feedback is not binary but rather continuous. The informed reader will notice that our main focus is on the case of $P$-type environment, bthe sake of brevity and due to space limitations while we give enough exposure and attention related to the $S$-type environment.
The remainder of this article is organized as follows. In Section \ref{sec:GameModel}, we present the game model for both $P-$type environments and $S-$type environments.
In Section \ref{sec:scheme_L_R_I}, we introduce our devised $L_{R-I}$ with artificial barriers for handling $P-$type environments. In  Section \ref{sec:scheme_S_learning}, we present the $S-$ LA scheme with absorbing barriers for handling the general cases of $S-$type environments. The experimental results related to the $L_{R-I}$ are presented in Section \ref{sec:simualations_L_RI}
while some experiments of $S-$ LA for handling   $S-$ type environments are given in the Appendix \ref{sec:appendix_B}.
 

\section{The Game Model}
\label{sec:GameModel}

In this section, we formalize the game model that is being investigated. Let $P(t)= \begin{bmatrix} p_1(t) & p_2(t)\end{bmatrix}\tp$ denote the mixed strategy of player $A$ at time instant $t$, where $p_1(t)$ accounts for the probability of adopting strategy $1$ and, conversely, $p_2(t)$ stands for the probability of adopting strategy $2$. Thus, $P(t)$ describes the distribution over the strategies of player $A$. Similarly, we can define the mixed strategy of player $B$ at time $t$ as $Q(t)= \begin{bmatrix}q_1(t) & q_2(t)\end{bmatrix}\tp$.
 The extension to more than two actions per player is straightforward following the method analogous to what was used by Papavassilopoulos \cite{papavassilopoulos1989learning}, which extended the work of Lakshmivarahan and Narendra \cite{lakshmivarahan1982learning}.

Let $\alpha_A(t) \in \{1,2\}$ be the action chosen by player $A$ at time instant $t$ and $\alpha_B(t) \in \{1,2\}$ be the one chosen by player $B$, following the probability distributions $P(t)$ and $Q(t)$, respectively. The pair $(\alpha_A(t),\alpha_B(t)$) constitutes the joint action at time $t$, and are pure strategies. Specifically, if $(\alpha_A(t),\alpha_B(t))=(i,j)$, the probability of reward for player $A$ is determined by $r_{ij}$ while that of player $B$ is determined by $c_{ij}$. Player $A$ 
is in this case the row player while player $B$ is the column player.

When we are operating in the $P$-type mode, the game is defined by two payoff matrices, $R$ and $C$ describing the reward probabilities of player $A$ and player $B$ respectively:

\begin{equation}
R=\begin{pmatrix}r_{11} & r_{12} \\ r_{21} &r_{22}\end{pmatrix},
\label{ref:R_matrix}
\end{equation}
\noindent

and the matrix $C$

\begin{equation}
C=\begin{pmatrix}c_{11} & c_{12} \\ c_{21} &c_{22}\end{pmatrix},
\label{ref:C_matrix}
\end{equation}
\noindent

where, as aforementioned,  all the entries of both matrices are probabilities.

In the case where the environment is a $S$-model type, the latter two matrices are deterministic and describe the feedback as a scalar in the interval $[0,1]$. For instance, if we operate in the $S$-type environment, the feedback when both players choose their respective first actions will be the scalar $c_11$ for player $A$ and not Bernoulli feedback such in the previous case of $P$-type environment. It is possible also to consider $c_11$ as stochastic continuous variable with mean  $c_11$ and which realization in $c_11$, however, for the sake of simplicity we consider $c_11$, and consequently $C$ and $R$ as deterministic. The asymptotic convergence proofs for the $S-$type environment will remain valid independently of whether $C$ and $R$ are deterministic  or whether they are obtained from a distribution with support in the interval $[0,1]$ and with their means defined by the matrices.

Independently of the environment type, whether it is $P-$type or $S-$type environments, we have three cases to be distinguished for equilibria:
\begin{itemize}

\item Case 1:  if $(r_{11}-r_{21})(r_{12}-r_{22})<0$, 
$(c_{11}-c_{12})(c_{21}-c_{22})<0$ and 
$(r_{11}-r_{21})(c_{11}-c_{12})<0$, there is just one mixed equilibrium.
\item Case 2: if $(r_{11}-r_{21})(r_{12}-r_{22})>0$ or
$(c_{11}-c_{12})(c_{21}-c_{22})>0$, then there is just one pure equilibrium since there is one player at least who has a dominant strategy.
\item Case 3:  if $(r_{11}-r_{21})(r_{12}-r_{22})<0$, 
$(c_{11}-c_{12})(c_{21}-c_{22})<0$ and 
$(r_{11}-r_{21})(c_{11}-c_{12})>0$, there are two pure equilibria and one mixed equilibrium.

\end{itemize}

In strategic games, Nash equilibria are equivalently called the ``Saddle Points'' for the game. Since the outcome for a given joint action is stochastic, the game is of stochastic genre.

\section{Game Theoretical LA Algorithm based on the $L_{R-I}$ with Artificial Barriers}
\label{sec:scheme_L_R_I}

In this section, we shall present our $L_{R-I}$ with artificial barriers that is devised specially for the $P$-type environments.

\subsection{Non-Absorbing Artificial Barriers}
\label{sec:Non-AbsorbingBarriers}

As we have seen above from surveying the literature, an originally ergodic LA can be rendered absorbing by operating a change in its end states. However, what is unkown in the literature is a scheme which is  originally  absorbing can be rendered ergodic. In many cases, this can be achieved by making the scheme behave according to to the absorbing scheme rule over the probability simplex and pushing the probability back inside the simplex whenever the scheme approaches absorbing barriers. Such a scheme is novel in the field of LA and its advantage is that the strategies avoids being absorbed in non-desirable absorbing barriers. Further, and interestingly, by countering the absorbing barriers, the scheme can migrate stochastically towards a desirable mixed strategy. Interestingly, as we will see later in the paper, even if the optimal strategy corresponds to an absorbing barrier the scheme will approach it. Thus, the scheme converges  to mixed strategies whenever they correspond to optimal strategies while approaching the absorbing states whenever they are the optimal strategies.
    We shall give the details of our devised scheme in the next section which enjoy the above mentioned properties.

\subsection{Non-Absorbing Game Playing}
\label{sec:Non-AbsorbingGames}

At this juncture, we shall present the design of our proposed LA scheme together with some theoretical results demonstrating that it can converge to the Saddle Points of the game even if the Saddle Point is a {\em mixed} Nash equilibrium. Our solution presents a new variant of the the $L_{R-I}$ scheme, which is made rather ergodic by modifying the update rule in a general form which makes the original $L_{R-I}$ with absorbing barriers corresponding to the corners of the simplex an instance of the latter general scheme for a particular choice of parameters of the scheme. The proof of convergence is based on Norman's theory for learning processes characterized by small learning steps \cite{Narendra2012, Norman1972}.

We introduce $p_{max}$ as the artificial barrier which is a real value close to 1. Similarly, we introduce $p_{min}=1-p_{max}$ which corresponds to the lowest value any action probability can take. In order to enforce the constraint that the probability of any action for both players remains within the interval ${ [}p_{min}, p_{max} {]}$ one should start by choosing  initial  values of $p_1(0)$ and $q_1(0)$ in the same interval, and further resorting to updates rules that ensure that each update keeps the probabilities within the same interval.

If the outcome from the environment is a reward at a time $t$ for action $i \in \{1,2\}$, the update rule is given by:
\begin{equation}
\label{eq:algorithmGain_LRI}
\begin{aligned}
p_i(t+1) & =   p_i(t)+\theta (p_{max}-p_i(t))\\
p_s(t+1) & =    p_s(t)+\theta (p_{min}-p_s(t)) &\textrm{for } \quad s\ne i.
\end{aligned}
\end{equation}
\noindent
where $\theta$ is a learning parameter.
The informed reader observes that the update rules coincides with the classical  $L_{R-I}$ except that  $p_{max}$ replaces unity for updating $p_i(t+1)$ and that $p_{min}$ replaces zero for updating $p_s(t+1)$.

Following the Inaction principle of the $L_{R-I}$, whenever the player receives a penalty, its action probabilities are kept unchanged which is formally given by:

\begin{equation}
\label{eq:algorithmLoss}
\begin{aligned}
p_i(t+1) & =    p_i(t)\\
p_s(t+1) & =   p_s(t) &\textrm{for} \quad s\ne i.
\end{aligned}
\end{equation}

The update rules for the mixed strategy $q(t+1)$ are defined in a similar fashion. We shall now move to a theoretical analysis of the convergence properties of our proposed algorithm for solving a strategic game. In order to denote the optimal Nash equilibrium of the  game we use the pair $(p_{\mathrm{opt}},q_{\mathrm{opt}})$.

We also should distinguish detail of the equilibrium according to the entries in the payoff matrices $R$ and $C$ for Case 1.
\paragraph{Case 1: Only One Mixed Nash Equilibrium Case (No Saddle Point in pure strategies)}
The first case depicts the situation where no Saddle Point exists in pure strategies. In other words, the only Nash equilibrium is a mixed one.
Based on the fundamentals of Game Theory, the optimal mixed strategies can be shown to be the following:

$$p_{\rm opt}= \frac{c_{22}-c_{21}}{L^{\prime}}, \quad q_{\rm opt}=\frac{r_{22}-r_{12}}{L},$$
\noindent
where
$L=(r_{11}+r_{22})-(r_{12}+r_{21})$
and
$L^{\prime}=(c_{11}+c_{22})-(c_{12}+c_{21})$.
This case can be divided into two sub-cases.
The first sub-case given by:
\begin{equation}
    r_{11} > r_{21},  r_{12} < r_{22}; c_{11} < c_{12},  c_{21} > c_{22},\label{eq:sub-cond1}
\end{equation}

The second sub-case given by:
\begin{equation}
    r_{11} < r_{21},  r_{12} > r_{22}; c_{11} > c_{12},  c_{21} < c_{22},  
    \label{eq:sub-cond2}
\end{equation}

Let the vector $X(t) = \begin{bmatrix} p_1(t) & q_1(t) \end{bmatrix}\tp$. We resort to the the notation $\Delta X(t) = X(t+1) - X(t)$. For denoting the conditional expected value operator we use the nomenclature $\bE[\cdot | \cdot]$. 
Using those notations, we introduce the next theorem of the article.
\begin{theorem}
\label{thm:expectedValue}
Consider a two-player game with a payoff matrices as in Eq. \eqref{ref:R_matrix} and Eq. \eqref{ref:C_matrix},  and a learning algorithm defined by equations Eq. \eqref{eq:algorithmGain_LRI} and Eq. \eqref{eq:algorithmLoss} for both players $A$ and $B$, with learning rate $\theta$.
Then, $E[\Delta X(t) | X(t)]=\theta W(x)$ and for every $\epsilon>0$, there exists a unique stationary point $X^*=\begin{bmatrix}p_1^* &q_1^*\end{bmatrix}\tp$ satisfying:
\begin{enumerate}
    \item $W(X^*)=0$;
    \item $|X^*-X_{\rm opt}|<\epsilon$.
\end{enumerate}
\end{theorem}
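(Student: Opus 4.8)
The plan is to compute $\bE[\Delta X(t)\mid X(t)]$ explicitly for each coordinate, extract the vector field $W$, and then show $W$ has a unique zero inside the admissible box $[p_{\min},p_{\max}]^2$ that converges to $X_{\rm opt}$ as the barrier approaches the simplex corner. First I would condition on the joint action $(\alpha_A,\alpha_B)=(i,j)$, which occurs with probability $p_i q_j$, and on a reward for player $A$, which then occurs with probability $r_{ij}$; only a reward changes $p_1$, and by the update rule \eqref{eq:algorithmGain_LRI} the increment in $p_1$ is $\theta(p_{\max}-p_1)$ when player $A$ played action $1$ and $\theta(p_{\min}-p_1)$ when player $A$ played action $2$. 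Summing over $j$ and using $q_1+q_2=1$, this gives
\begin{equation}
\bE[\Delta p_1(t)\mid X(t)] = \theta\,p_1\bigl(r_{11}q_1+r_{12}q_2\bigr)(p_{\max}-p_1) + \theta\,p_2\bigl(r_{21}q_1+r_{22}q_2\bigr)(p_{\min}-p_1),
\end{equation}
and the symmetric computation for player $B$ (rewards governed by $c_{ij}$, roles of rows/columns swapped) yields $\bE[\Delta q_1(t)\mid X(t)]$. This identifies $W(X)=\theta^{-1}\bE[\Delta X\mid X]$ as a fixed polynomial vector field, independent of $\theta$, establishing the first assertion $\bE[\Delta X\mid X]=\theta W(X)$.

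Next I would analyze the zeros of $W$. Writing $p_2=1-p_1$, $q_2=1-q_1$, the first component of $W$ factors so that $W_1(X)=0$ forces either $p_1\in\{p_{\min},p_{\max}\}$ (boundary) or the bracketed affine-in-$q_1$ expression to vanish, i.e. $q_1$ equals a specific value $q^\dagger$ determined by the $r_{ij}$ entries; similarly $W_2(X)=0$ forces $p_1=p^\dagger$ determined by the $c_{ij}$ entries. Under the Case 1 sign conditions \eqref{eq:sub-cond1}–\eqref{eq:sub-cond2} one checks that the boundary possibilities are inconsistent with the other equation (the affine coefficients have the signs that push the probability strictly inward), so the only interior equilibrium is $X^*=(p^\dagger,q^\dagger)$, and a short computation shows $p^\dagger=\tfrac{c_{22}-c_{21}}{L'}+O(p_{\min})$ and $q^\dagger=\tfrac{r_{22}-r_{12}}{L}+O(p_{\min})$, so that $X^*\to X_{\rm opt}$ as $p_{\min}\to 0$. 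Hence for every $\epsilon>0$ choosing $p_{\max}$ close enough to $1$ gives $|X^*-X_{\rm opt}|<\epsilon$, which is the second assertion. I would also record here that $X^*$ lies in the interior of $[p_{\min},p_{\max}]^2$, since this is what makes the subsequent Norman-type convergence argument applicable.

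The main obstacle I anticipate is the case analysis showing uniqueness of the interior zero and ruling out the boundary zeros: $W_1$ and $W_2$ each vanish on a union of a vertical/horizontal segment and a "nullcline" curve, and one must argue carefully, using exactly the inequalities in \eqref{eq:sub-cond1} and \eqref{eq:sub-cond2}, that these loci intersect at a single interior point and that no spurious solution sits on $\partial([p_{\min},p_{\max}]^2)$. The bookkeeping is routine but sign-sensitive, and it is where the two sub-cases genuinely diverge; handling both in parallel (or by a substitution $p_1\mapsto 1-p_1$ that swaps them) keeps it manageable. A secondary, milder point is to verify that each coordinate update indeed keeps $p_1,q_1$ within $[p_{\min},p_{\max}]$ so that $X(t)$ never leaves the box — this follows immediately because \eqref{eq:algorithmGain_LRI} is a convex combination pulling toward $p_{\max}$ or $p_{\min}$, but it should be stated since the stationary-point claim is implicitly about this invariant region.
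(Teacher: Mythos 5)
Your computation of the drift is exactly the paper's: conditioning on the joint action and on whether player $A$ is rewarded gives $\bE[\Delta p_1\mid X]=\theta\bigl[p_1\bigl(q_1r_{11}+(1-q_1)r_{12}\bigr)(p_{max}-p_1)+(1-p_1)\bigl(q_1r_{21}+(1-q_1)r_{22}\bigr)(p_{min}-p_1)\bigr]$, and your overall strategy --- intersect the two nullclines and let $p_{min}\to 0$ --- is also the paper's.

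However, the step where you assert that $W_1$ ``factors so that $W_1(X)=0$ forces either $p_1\in\{p_{min},p_{max}\}$ or the bracketed affine-in-$q_1$ expression to vanish'' is false, and the uniqueness argument as you frame it (locate boundary zeros, then rule them out) would fail if executed literally. For $p_{min}>0$ the two terms of $W_1$ share no common factor: at $p_1=p_{max}$ the first term vanishes but the second equals $p_{min}(p_{min}-p_{max})D_2^A(q_1)<0$, and at $p_1=p_{min}$ the second term vanishes but the first equals $p_{min}(p_{max}-p_{min})D_1^A(q_1)>0$. So neither barrier value is a zero of $W_1$, and the zero set of $W_1$ is a single curve $p_1=\phi(q_1)$ rather than a union of boundary segments and a nullcline; the picture you describe is the degenerate $p_{min}=0$ limit. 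The correct replacement --- and what the paper actually does --- is an endpoint-sign argument: $W_1$ is quadratic in $p_1$ with $W_1=p_{min}D_2^A(q_1)>0$ at $p_1=0$ and $W_1=-p_{min}D_1^A(q_1)<0$ at $p_1=1$, hence exactly one root in $(0,1)$ for each $q_1$, positioned according to the sign of $D_1^A(q_1)-D_2^A(q_1)$; the symmetric statement for $W_2$ together with the sub-case sign conditions yields a unique intersection point tending to $X_{\rm opt}$ as $p_{min}\to 0$. Your parenthetical remark that the field ``pushes the probability strictly inward'' at the boundary is this endpoint-sign argument in disguise, so the repair is local; but as written the factorization claim is the step that does not survive the algebra. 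Your closing estimates $p^\dagger=(c_{22}-c_{21})/L^{\prime}+O(p_{min})$ and $q^\dagger=(r_{22}-r_{12})/L+O(p_{min})$ are correct and in fact slightly sharper than the paper's qualitative ``for small enough $p_{min}$'' conclusion.
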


\begin{proof}
We start by fist  computing the conditional expected value of the increment $\Delta X(t)$:
\begin{align*}
E[\Delta X(t) | X(t)]
&= E[X(t+1)-X(t)\vert X(t)]\\
&=\begin{bmatrix}
E[p_1(t+1)-p_1(t)\vert X(t)] \\ E[q_1(t+1)-q_1(t)\vert X(t)])
\end{bmatrix}\\
&=\theta\begin{bmatrix}
W_1(X(t)) \\ W_2(X(t))
\end{bmatrix}\\
&=\theta W(X(t)),
\end{align*}
where the above format is possible since all possible updates share the form $\Delta X(t) = \theta W(t)$, for some $W(t)$, as given in Eq. \eqref{eq:algorithmGain_LRI}.

For ease of notation, we drop the dependence on $t$ with the implicit assumption that all occurrences of $X$, $p_1$ and $q_1$ represent $X(t)$, $p_1(t)$ and $q_1(t)$ respectively. $W_1(x)$ is then:

\begin{equation}
\resizebox{0.9\hsize}{!}{$
\begin{aligned}
    W_1({X}) = \ \ &p_1 q_1 r_{11} (p_{max} - p_1) + p_1 (1-q_1) r_{12} (p_{max} - p_1)\\
    + &(1-p_1) q_1 r_{21} (p_{min} - p_1) \\
    + &(1-p_1) (1-q_1) r_{22} (p_{min} - p_1) \\
    = \ \ & p_1 \left [ q_1 r_{11} + (1-q_1) r_{12} \right ] (p_{max} - p_1) \\
    + & (1-p_1) \left [ q_1 r_{21} + (1-q_1) r_{22} \right ] (p_{min} - p_1) \\
    = \ \ &  p_1 (p_{max} - p_1) D_1^A(q_1) + (1-p_1) (p_{min} - p_1) D_2^A(q_1),  \\
\end{aligned}
$}
\end{equation}
where,
\begin{equation}
D_{1}^{A}(q_1)=q_1 r_{11}+(1-q_1)r_{12}
\end{equation}
\begin{equation}
D_{2}^{A}(q_1)=q_1 r_{21}+(1-q_1)r_{22}.
\end{equation}
By replacing $p_{max} = 1-p_{min}$ and rearranging the expression we get:
\begin{align*}
W_{1}({X})=& \ \ p_1(1-p_1)D_1^A(q_1) - p_1 p_{min} D_1^A(q_1) \\
+ & (1-p_1)p_{min} D_2^A(q_1) - p_1(1-p_1)D_2^A(q_1) \\
=& \ \ p_1 (1-p_1) \left [ D_{1}^{A}(q_1)-D_{2}^{A}(q_1)\right ] \\& -p_{min} \left [ p_1 D_{1}^{A}(q_1)-(1-p_1) D_{2}^{A}(q_1)\right ].
\end{align*}

Similarly, we can get
\begin{equation}
\resizebox{0.9\hsize}{!}{$
\begin{aligned}
    W_2({X}) = \ \ &q_1 p_1 c_{11} (p_{max} - q_1) +\\& q_1 (1-p_1) c_{21} (p_{max} - q_1)\\
    + &(1-q_1) p_1 c_{12} (p_{min} - q_1) +\\& (1-q_1) (1-p_1) c_{22} (p_{min} - q_1) \\
    = \ \ & q_1 \left [ p_1 c_{11} + (1-p_1) c_{21} \right ] (p_{max} - q_1) \\
    + & (1-q_1) \left [ p_1 c_{12} + (1-p_1) c_{22} \right ] (p_{min} - q_1) \\
    = \ \ &  q_1 (p_{max} - q_1)  D_1^B(p_1)  + \\& (1-q_1) (p_{min} - q_1)  D_2^B(p_1)   \\
\end{aligned}$}
\end{equation}
where
\begin{equation}
D_{1}^{B}(p_1)=p_1 c_{11}+(1-p_1)c_{21}
\end{equation}
\begin{equation}
D_{2}^{B}(p_1)=p_1 c_{12}+(1-p_1)c_{22}.
\end{equation}
By replacing $p_{max} = 1-p_{min}$ and rearranging the expression we get:
\begin{equation}
\resizebox{\hsize}{!}{$
\begin{aligned}
W_{2}({X})=& \ \ q_1(1-q_1)(1-D_1^B(p_1)) - q_1 p_{min} D_1^B(p_1) \\
+ & (1-q_1)p_{min} D_2^B(p_1) - q_1(1-q_1) D_2^B(p_1) \\
=& q_1 (1-q_1) \left [ D_{1}^{B}(p_1)-D_{2}^{B}(p_1)\right ] - \\& p_{min} \left [ q_1 D_{1}^{B}(p_1)-(1-q_1)D_{2}^{B}(p_1)\right ].
\end{aligned}
$}
\end{equation}

We need to address the three identified cases.

Consider Case 1: Only One Mixed Equilibrium Case, where there is only a single mixed equilibrium. We get
\begin{equation}
\begin{aligned}
D^{A}_{12}(q_1)&=D_{1}^{A}(q_1)-D_{2}^{A}(q_1)\\
&=(r_{12}-r_{22})+L q_1.
\end{aligned}
\end{equation}

For the sake of brevity, we consider the first sub-case given by condition Eq. \ref{eq:sub-cond1}.
We have $L>0$, since $r_{11} > r_{12}$ and $r_{22} > r_{21}$. Therefore $D_{12}^{A}(q_1)$ is an increasing function of $q_1$ and
\begin{equation}
\begin{cases}
D_{12}^{A}(q_1) <0,  \hbox{if } q_1< q_{\rm opt},\\
D_{12}^{A}(q_1) =0, \hbox{if } q_1=q_{\rm opt},\\
D_{12}^{A}(q_1) > 0,  \hbox{if } q_1>q_{\rm opt}.
\end{cases}
\end{equation}
For a given $q_1$, $W_1({X})$ is quadratic in $p_1$. Also, we have:
\begin{equation}
    \label{eq:boundsW1}
    \begin{aligned}
    W_1 \left (\begin{bmatrix}
        0\\
        q_1
        \end{bmatrix}\right ) & = p_{min} D_{2}^{A}(q_1) > 0 \\
    W_1 \left (\begin{bmatrix}
        1\\
        q_1
        \end{bmatrix}\right ) & = -p_{min} D_{1}^{A}(q_1) < 0.
    \end{aligned}
\end{equation}
Since $W_1({X})$ is quadratic with a negative second derivative with respect to $p_1$, and since the inequalities in Eq. \eqref{eq:boundsW1} are strict, it admits a single root $p_1$ for $p_1 \in [0,1]$. Moreover, we have $W_1({X}) = 0$ for some $p_1$ such that:
\begin{equation}
\begin{cases}
p_1 <\frac{1}{2},  \hbox{if } q_1< q_{\rm opt},\\
p_1 =\frac{1}{2} , \hbox{if } q_1=q_{\rm opt},\\
p_1 >\frac{1}{2},  \hbox{if } q_1>q_{\rm opt}.
\end{cases}
\end{equation}
Using a similar argument, we can see that there exists a single solution for each $p_1$, and as $p_{min} \to 0$, we conclude that $W_{1}(X)=0$ whenever $p_1 \in \{0, p_{\mathrm{opt}}, 1\}$. Arguing in a similar manner we see that $W_2(X)=0$ when:

$X \in \{\begin{bmatrix} 0 \\ 0 \end{bmatrix},\begin{bmatrix} 0 \\ 1 \end{bmatrix}, \begin{bmatrix} 1 \\ 0 \end{bmatrix},\begin{bmatrix} 1 \\ 1 \end{bmatrix},\begin{bmatrix} p_{\mathrm{opt}} \\ q_{\mathrm{opt}} \end{bmatrix}\} $.

\noindent
Thus, there exists a small enough value for $p_{min}$ such that  $X^*=[p^*,q^*]\tp$ satisfies $W_2(X^*)=0$, proving Case 1).

In the proof of Case 1), we take advantage of the fact that for small enough $p_{min}$, the learning algorithm enters a stationary point, and also identified the corresponding possible values for this point. It is thus always possible to select a small enough $p_{min} >0$ such that $X^*$ approaches $X_{\mathrm{opt}}$, concluding the proof for Case 1.)

Case 2) and Case 3) can be derived in a similar manner, and the details are omitted to avoid repetition.
\end{proof}

In the next theorem, we show that the expected value of $\Delta X(t)$ has a negative definite gradient.

\begin{theorem}
\label{thm:derivative}
The matrix of partial derivatives, $\frac{\partial W(X^*)}{\partial x}$ is  negative definite.
\end{theorem}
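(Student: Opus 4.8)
The plan is to compute the Jacobian $\frac{\partial W}{\partial X}$ explicitly and then evaluate it at the stationary point $X^*$, exploiting the structure that emerges in the limit $p_{\min} \to 0$. Recall that $W_1(X) = p_1(1-p_1) D_{12}^A(q_1) - p_{\min}[p_1 D_1^A(q_1) - (1-p_1) D_2^A(q_1)]$ and similarly for $W_2(X)$ with the roles of $p_1,q_1$ swapped and $c$-entries replacing $r$-entries. Since $D_{12}^A(q_1) = (r_{12}-r_{22}) + L q_1$ is affine in $q_1$ (and does not depend on $p_1$), and likewise $D_{12}^B(p_1) = (c_{21}-c_{22}) + L' p_1$ is affine in $p_1$, the four partial derivatives are straightforward to write down. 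First I would record:
\begin{align*}
\frac{\partial W_1}{\partial p_1} &= (1-2p_1) D_{12}^A(q_1) - p_{\min}\bigl[ D_1^A(q_1) + D_2^A(q_1)\bigr],\\
\frac{\partial W_1}{\partial q_1} &= p_1(1-p_1) L - p_{\min}\bigl[ p_1 (r_{11}-r_{12}) - (1-p_1)(r_{21}-r_{22})\bigr],\\
\frac{\partial W_2}{\partial p_1} &= q_1(1-q_1) L' - p_{\min}\bigl[ q_1 (c_{11}-c_{21}) - (1-q_1)(c_{12}-c_{22})\bigr],\\
\frac{\partial W_2}{\partial q_1} &= (1-2q_1) D_{12}^B(p_1) - p_{\min}\bigl[ D_1^B(p_1) + D_2^B(p_1)\bigr].
\end{align*}

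Next I would evaluate these at $X^*$. By Theorem~\ref{thm:expectedValue}, as $p_{\min} \to 0$ the stationary point $X^*$ of interest approaches $X_{\rm opt} = (p_{\rm opt}, q_{\rm opt})$, at which $D_{12}^A(q_{\rm opt}) = 0$ and $D_{12}^B(p_{\rm opt}) = 0$ (this is precisely the defining property of the mixed equilibrium). Hence the diagonal entries $\partial W_1/\partial p_1$ and $\partial W_2/\partial q_1$ vanish to leading order, leaving only the $O(p_{\min})$ terms $-p_{\min}[D_1^A(q_{\rm opt}) + D_2^A(q_{\rm opt})]$ and $-p_{\min}[D_1^B(p_{\rm opt}) + D_2^B(p_{\rm opt})]$, which are strictly negative since all payoff entries are nonnegative (indeed the $D$'s are convex combinations of positive reward probabilities, so their sums are positive). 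Meanwhile the off-diagonal entries are $O(1)$: to leading order $\partial W_1/\partial q_1 \to p_{\rm opt}(1-p_{\rm opt}) L$ and $\partial W_2/\partial p_1 \to q_{\rm opt}(1-q_{\rm opt}) L'$. The key sign fact, from the Case~1 sub-conditions Eq.~\eqref{eq:sub-cond1}/\eqref{eq:sub-cond2}, is that $L$ and $L'$ have opposite signs (in sub-case~1, $L>0$ and $L'<0$), so the product of the two off-diagonal entries is negative.

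Then I would verify negative definiteness. For a $2\times 2$ matrix $J = \begin{bmatrix} a & b \\ c & d\end{bmatrix}$, symmetric-part negative definiteness (or the relevant notion used for the stability argument) amounts to $a < 0$, $d < 0$, and for the determinant-type condition $ad - \tfrac14(b+c)^2 > 0$ or $ad - bc > 0$ depending on the convention. Here $a = -p_{\min}\sigma_A$ and $d = -p_{\min}\sigma_B$ with $\sigma_A, \sigma_B > 0$, so $ad = p_{\min}^2 \sigma_A\sigma_B > 0$, while $bc = p_{\rm opt}(1-p_{\rm opt}) q_{\rm opt}(1-q_{\rm opt}) L L' < 0$ since $LL' < 0$. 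Therefore $ad - bc > 0$ automatically, and in fact $-bc$ alone is bounded away from zero, so for all sufficiently small $p_{\min}$ the determinant condition and the negativity of the diagonal both hold, giving that $\frac{\partial W(X^*)}{\partial X}$ is negative definite (in the sense that $\frac12(J + J^\intercal)$ has negative eigenvalues, or that $J$ itself is Hurwitz, as needed for the Norman-theory convergence argument that follows).

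The main obstacle I anticipate is not any single computation but rather the bookkeeping of signs across the two sub-cases of Case~1 (and then Cases~2 and~3), together with being careful that the conclusion is stated for the stationary point $X^*$ at finite small $p_{\min}$ rather than only in the limit --- one must argue by continuity that the strict inequalities (diagonal strictly negative, $LL' < 0$ bounded away from $0$) persist for $p_{\min}$ small enough, and that $X^*$ stays in the interior so that $p^*(1-p^*)$ and $q^*(1-q^*)$ remain bounded away from $0$. For Cases~2 and~3 the off-diagonal product sign changes, but there a pure equilibrium is the relevant attractor and the diagonal terms no longer vanish (since $D_{12}^A$, $D_{12}^B$ need not be zero at the corner), so a separate but analogous sign check applies; I would handle Case~1 in full and remark that the others follow mutatis mutandis, mirroring the structure of the previous proof.
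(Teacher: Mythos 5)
Your proposal is correct and follows essentially the same route as the paper: compute the four entries of the Jacobian, evaluate at $X_{\rm opt}$ where $D_{12}^A(q_{\rm opt})=D_{12}^B(p_{\rm opt})=0$ so the diagonal reduces to strictly negative $O(p_{\min})$ terms while the off-diagonal product $p_{\rm opt}(1-p_{\rm opt})q_{\rm opt}(1-q_{\rm opt})LL'$ is negative in Case~1, and conclude via the negative-trace/positive-determinant criterion, with the analogous corner computations for Cases~2 and~3. (Your expression for $\partial W_2/\partial p_1$, with the factors $c_{11}-c_{21}$ and $c_{12}-c_{22}$, is in fact the correct one; the paper's indices there appear to be transposed, though this does not affect the argument since those terms are $O(p_{\min})$.)
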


\begin{proof}
We start the proof by writing the explicit format for $\frac{\partial W(X)}{\partial X} =
\begin{bmatrix}
  \frac{\partial W_{1}(X)}{\partial p_1} & \frac{\partial W_{1}(X)}{\partial q_1}\\
\frac{\partial W_{2}(X)}{\partial p_1} &\frac{\partial W_{2}(X)}{\partial q_1},
\end{bmatrix}$
and then computing each of the entries as below:

\begin{align*}
\frac{\partial W_{1}(X)}{\partial p_1} =& (1-2p_1) \left( D_{1}^{A}(q_1)-D_{2}^{A}(q_1)\right) - \\&p_{min} \left ( D_{1}^{A}(q_1)+ D_{2}^{A}(q_1)\right )\\
=& (1-2p_1)D^{A}_{12}(q_1) -\\&p_{min} \left ( D_{1}^{A}(q_1)+ D_{2}^{A}(q_1)\right).
\end{align*}

\begin{align*}
\frac{\partial W_{1}(X)}{\partial q_1} =&\,p_1(1-p_1)L-p_{min}(p_1(r_{11}-r_{12})+\\&(1-p_1)(r_{22}-r_{21})).
\end{align*}

\begin{align*}
\frac{\partial W_{2}(X)}{\partial p_1} =& q_1(1-q_1) L^{\prime} -p_{min}  ( q_1 (c_{11}-c_{12})  +\\& (1-q_1)(c_{22}-c_{21})).
\end{align*}

\begin{align*}
\frac{\partial W_{2}(X)}{\partial q_1} =& (1-2q_1)D^{B}_{12}(p_1) -\\&p_{min} \left ( D_{1}^{B}(p_1)+ D_{2}^{B}(p_1)\right).
\end{align*}

As seen in Theorem \ref{thm:expectedValue}, for a small enough value for $p_{min}$, we can ignore the terms that are weighted by $p_{min}$, and we will thus have $\frac{\partial W(X^*)}{\partial X} \approx  \frac{\partial W(X_{\mathrm{opt}})}{\partial X}$. We now subdivide the analysis into the three cases.

\paragraph{Case 1: No Saddle Point in pure strategies}

In this case, we have:
\begin{equation*}
D_{1}^{A}(q_{\rm opt})= D_{2}^{A}(q_{\rm opt}) \quad \text{ and }D_{1}^{B}(p_{\rm opt})= D_{2}^{B}(p_{\rm opt})
\end{equation*}
which makes
\begin{equation}
    \frac{\partial W_{1}(X_{\mathrm{opt}})}{\partial p_1} = -2p_{min} D_{1}^{A}(q_{\mathrm{opt}}).
\end{equation}
Similarly, we can compute
\begin{equation}
    \frac{\partial W_{1}(X_{\mathrm{opt}})}{\partial q_1} = (1-2p_{min})p_{\mathrm{opt}}(1-p_{\mathrm{opt}})L.
\end{equation}
The entry $\frac{\partial W_{2}(X_{\mathrm{opt}})}{\partial p_1}$ can be simplified to:
\begin{equation}
\frac{\partial W_{2}(X_{\mathrm{opt}})}{\partial p_1} = (1-2p_{min})q_{\mathrm{opt}}(1-q_{\mathrm{opt})}L^{\prime}
\end{equation}
and
\begin{equation}
\frac{\partial W_{2}(X_{\mathrm{opt}})}{\partial q_1} = 2p_{min}D_1^B(p_{\mathrm{opt}})
\end{equation}
resulting in:
\begin{equation}
\label{eq:diffW}
\resizebox{\hsize}{!}{$
    \frac{\partial W(X_{\mathrm{opt}})}{\partial X} =
    \begin{bmatrix}
      -2p_{min} D_{1}^{A}(q_{\mathrm{opt}}) & (1-2p_{min})p_{\mathrm{opt}}(1-p_{\mathrm{opt}})L\\
      (1-2p_{min})q_{\mathrm{opt}}(1-q_{\mathrm{opt})}L^{\prime} & -2p_{min}D_1^B(p_{\mathrm{opt}})
    \end{bmatrix}$}.
\end{equation}

We know that this case can be divided into two sub-cases. Let us consider the first sub-case given by:
\begin{equation}
    r_{11} > r_{21},  r_{12} < r_{22}; c_{11} < c_{12},  c_{21} > c_{22},\label{eq:L>0,Lprime<0}
\end{equation}

Thus, $L>0$ and $L^{\prime}<0$ as a consequence of Eq. \eqref{eq:L>0,Lprime<0}

Thus, the matrix given in Eq. \eqref{eq:diffW} satisfies:
\begin{equation}
    \mathrm{det}\left(\frac{\partial W({X}_{\mathrm{opt}})}{\partial x}\right) > 0 \, \text{ , } \,
    \mathrm{trace}\left(\frac{\partial W({X}_{\mathrm{opt}})}{\partial x}\right) < 0,
\end{equation}
which implies the $2\times 2$ matrix is negative definite.

\paragraph{Case 2: Only one single pure equilibrium}
In Case 2, corresponds to. According to this case:
 $(r_{11}-r_{21})(r_{12}-r_{22})>0$ or
$(c_{11}-c_{12})(c_{21}-c_{22})>0$

The condition for only one pure equilibrium can be divided into four different sub-cases.

Without loss of generality, we can consider a particular sub-case  where $q_{\rm opt}=1$ and $p_{\rm opt}=1$. This reduces to $r_{11}-r_{21}>0$ and $c_{11}-c_{12}>0$.

Computing the entries of the matrix for this case yields:
\begin{equation}
    \frac{\partial W_{1}(X_{\mathrm{opt}})}{\partial p_1} = -(r_{11}-r_{21})-p_{min} (r_{11}+r_{21}),
\end{equation}
and
\begin{equation}
    \frac{\partial W_{1}(X_{\mathrm{opt}})}{\partial q_1} = -p_{min}(r_{11}-r_{12}).
\end{equation}
The entry $\frac{\partial W_{2}(X_{\mathrm{opt}})}{\partial p_1}$ can be simplified to:
\begin{equation}
\frac{\partial W_{2}(X_{\mathrm{opt}})}{\partial p_1} = -p_{min}(c_{11}-c_{12})
\end{equation}
and
\begin{equation}
\frac{\partial W_{2}(X_{\mathrm{opt}})}{\partial q_1} = -(c_{11}-c_{12})-p_{min} (c_{11}+c_{12})
\end{equation}
resulting in:
\begin{align}
   & \frac{\partial W(X_{\mathrm{opt}})}{\partial X} \\\notag =
    & \scriptsize{\begin{bmatrix}
      -(r_{11}-r_{21})-p_{min} (r_{11}+r_{21}) & -p_{min}(r_{11}-r_{12})\\
     - p_{min}(c_{11}-c_{12}) & -(c_{11}-c_{12})-p_{min} (c_{11}+c_{12})
    \end{bmatrix}}.
\end{align}
The matrix in \eqref{eq:diffWCase2} satisfies:
\begin{equation}
    \mathrm{det}\left(\frac{\partial W(X_{\mathrm{opt}})}{\partial X}\right) > 0 \, \text{ , } \,
    \mathrm{trace}\left(\frac{\partial W(X_{\mathrm{opt}})}{\partial X}\right) < 0
\end{equation}
for a sufficiently small value of $p_{min}$, which again implies that the $2\times 2$ matrix is negative definite.

\paragraph{Case 3: Two pure equlibria and one mixed equilibrium}

In this case, $(r_{11}-r_{21})(r_{12}-r_{22})<0$, 
$(c_{11}-c_{12})(c_{21}-c_{22})<0$ and 
$(r_{11}-r_{21})(c_{11}-c_{12})>0$.

Without loss of generality, we suppose that  $(p_{\rm opt},q_{\rm opt})=(1,1)$ and  $(p_{\rm opt},q_{\rm opt})=(0,0)$ are the two pure Nash equilibria.
This corresponds to
a sub-case where:
\begin{equation}
  r_{11}-r_{21}>0  c_{11}-c_{12}>0,  r_{22}-r_{12}>0  c_{22}-c_{21}>0,
    \label{eq:case3_sub-cond}
\end{equation}

$r_{11}-r_{21}>0$ and $c_{11}-c_{12}>0$ because of the Nash equilibrium $(p_{\rm opt},q_{\rm opt})=(1,1)$
Similarly, 
$r_{22}-r_{12}>0$ and $c_{22}-c_{21}>0$
because of the Nash equilibrium $(p_{\rm opt},q_{\rm opt})=(1,1)$

Whenever  $(p_{\rm opt},q_{\rm opt})=(1,1)$, we obtain stability of the fixed point as demonstrated in the previous case, case 2.

Now, let us consider the stability for $(p_{\rm opt},q_{\rm opt})=(0,0)$.

Computing the entries of the matrix for this case yields:
\begin{equation}
    \frac{\partial W_{1}(X_{\mathrm{opt}})}{\partial p_1} = (r_{12}-r_{22})-p_{min} (r_{12}+r_{22}),
\end{equation}
and
\begin{equation}
    \frac{\partial W_{1}(X_{\mathrm{opt}})}{\partial q_1} = -p_{min}(r_{22}-r_{21}).
\end{equation}
The entry $\frac{\partial W_{2}(X_{\mathrm{opt}})}{\partial p_1}$ can be simplified to:
\begin{equation}
\frac{\partial W_{2}(X_{\mathrm{opt}})}{\partial p_1} = -p_{min}(c_{22}-c_{12})
\end{equation}
and
\begin{equation}
\frac{\partial W_{2}(X_{\mathrm{opt}})}{\partial q_1} = (c_{21}-c_{22})-p_{min} (c_{21}+c_{22})
\end{equation}
resulting in:
\begin{align}
\label{eq:diffWCase2}
   & \frac{\partial W(X_{\mathrm{opt}})}{\partial X} \\\notag =
    & \scriptsize{\begin{bmatrix}
      (r_{12}-r_{22})-p_{min} (r_{12}+r_{22}) & -p_{min}(r_{22}-r_{21})\\
     -p_{min}(c_{22}-c_{12}) & (c_{21}-c_{22})-p_{min} (c_{21}+c_{22})
    \end{bmatrix}}.
\end{align}
The matrix in \eqref{eq:diffWCase2} satisfies:
\begin{equation}
    \mathrm{det}\left(\frac{\partial W(X_{\mathrm{opt}})}{\partial X}\right) > 0 \, \text{ , } \,
    \mathrm{trace}\left(\frac{\partial W(X_{\mathrm{opt}})}{\partial X}\right) < 0
\end{equation}
for a sufficiently small value of $p_{min}$, which again implies that the $2\times 2$ matrix is negative definite.

Now, what remains to be shown is that the mixed Nash equilibrium in this case is unstable.

\begin{equation}
\resizebox{\hsize}{!}{$
    \frac{\partial W(X_{\mathrm{opt}})}{\partial X} =
    \begin{bmatrix}
      -2p_{min} D_{1}^{A}(q_{\mathrm{opt}}) & (1-2p_{min})p_{\mathrm{opt}}(1-p_{\mathrm{opt}})L\\
      (1-2p_{min})q_{\mathrm{opt}}(1-q_{\mathrm{opt})}L^{\prime} & -2p_{min}D_1^B(p_{\mathrm{opt}})
    \end{bmatrix}$}.
\end{equation}

Using Eq. \ref{eq:case3_sub-cond}, we can see that
$L>0$ and $L^{\prime}>0$ and thus:

\begin{equation}
\mathrm{det}\left(\frac{\partial W(X_{\mathrm{opt}})}{\partial X}\right) < 0
\end{equation}

\end{proof}

\begin{theorem}
We consider the update equations given by the $L_{R-I}$ scheme.
For a sufficiently small $p_{min}$ approaching $0$,  and as $\theta \to 0$ and  as time goes to infinity:

$\begin{bmatrix} E(p_1(t)) & E(q_1(t)) \end{bmatrix}$  $\rightarrow$ $\begin{bmatrix}p_{\mathrm{opt}}^* &q_{\mathrm{opt}}^*\end{bmatrix}$

where $\begin{bmatrix}p_{\mathrm{opt}}^* &q_{\mathrm{opt}}^*\end{bmatrix}$ corresponds to a Nash equilibrium of the game.
\end{theorem}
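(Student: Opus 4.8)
The plan is to invoke Norman's theory of distance-diminishing learning processes with small learning steps \cite{Norman1972, Narendra2012}, for which Theorems \ref{thm:expectedValue} and \ref{thm:derivative} have already supplied the two essential ingredients. First I would verify that $\{X(t)\}_{t\ge 0}$, with $X(t)=\begin{bmatrix}p_1(t)&q_1(t)\end{bmatrix}\tp$, is a homogeneous Markov process taking values in the compact set $[p_{min},p_{max}]^2$, whose one-step increment has the form $\Delta X(t)=\theta\, W(X(t))+\theta\,\xi(t)$ with $\mathbb{E}[\xi(t)\mid X(t)]=0$ and $\xi(t)$ bounded; this is immediate from Eqs. \eqref{eq:algorithmGain_LRI}--\eqref{eq:algorithmLoss}, since every admissible update moves $X$ by an amount proportional to $\theta$ and $W$ was identified as the conditional mean of that move in Theorem \ref{thm:expectedValue}. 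Because $W$ is a polynomial in $(p_1,q_1)$ it is Lipschitz on the compact state space, so the regularity hypotheses of Norman's theorem hold.

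Second, Norman's weak-convergence result then says that, on the slow time scale $\tau=\theta t$, the piecewise-constant interpolation of $X(\lfloor \tau/\theta\rfloor)$ converges, as $\theta\to 0$, to the deterministic solution $x(\tau)$ of the ODE $\dfrac{dx}{d\tau}=W(x)$ with $x(0)=X(0)$, uniformly on compact $\tau$-intervals, and the fluctuations around $x(\tau)$ are $O(\sqrt\theta)$. Hence, up to terms that vanish as $\theta\to 0$, the trajectory of $\mathbb{E}[X(t)]$ is governed by this ODE, and its long-run behaviour is dictated by the stable equilibria of $W$.

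Third, I would combine the two earlier theorems to analyse those equilibria. Theorem \ref{thm:expectedValue} shows that for $p_{min}$ small enough the only zeros of $W$ inside $[p_{min},p_{max}]^2$ lie within $\epsilon$ of the Nash equilibria of the game (the corner pure equilibria together with the interior mixed one when it exists). Theorem \ref{thm:derivative} shows that $\partial W/\partial X$ evaluated at the zero $X^*$ approaching a \emph{stable} Nash equilibrium has negative trace and positive determinant, hence is negative definite; by the Lyapunov/Hartman--Grobman linearisation theorem such an $X^*$ is a locally asymptotically stable equilibrium of the ODE. In Cases 1 and 2 there is a unique such point and it attracts every trajectory started in the interior of the simplex, so $x(\tau)\to X^*$ and therefore $\mathbb{E}[X(t)]\to X^*$; letting $p_{min}\to 0$ gives $X^*\to X_{\mathrm{opt}}$. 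In Case 3 the same argument gives local asymptotic stability at each of the two pure equilibria, while Theorem \ref{thm:derivative} shows the Jacobian at the mixed equilibrium has negative determinant and is therefore a hyperbolic saddle, whose stable manifold is a one-dimensional curve of measure zero; for almost every initial condition the trajectory leaves a neighbourhood of the mixed point and is absorbed by one of the two stable pure equilibria. In every case the limit is a Nash equilibrium, which is the claim.

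The main obstacle I anticipate is making the order of limits rigorous: the statement sends $\theta\to 0$, $t\to\infty$ and $p_{min}\to 0$, and one must ensure Norman's estimates are uniform enough in $p_{min}$ — the Lipschitz constant of $W$ and the radius of the basin of attraction of $X^*$ both depend on $p_{min}$ — so that the three limits can be taken in the stated order without the relevant neighbourhoods collapsing faster than the process concentrates. A secondary subtlety is Case 3, where one must argue that the small but non-degenerate noise $\xi(t)$ genuinely excites the unstable direction near the mixed equilibrium and so prevents the process from being trapped on the measure-zero stable manifold of the saddle; this is standard for non-degenerate small-noise processes but should be checked from the explicit form of the updates.
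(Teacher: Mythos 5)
Your proposal follows essentially the same route as the paper: the paper's own proof simply invokes Norman's theorem from the Appendix, asserts that Assumptions (1)--(6) are straightforwardly satisfied, and combines this with Theorems \ref{thm:expectedValue} and \ref{thm:derivative} to conclude. Your write-up is in fact considerably more detailed than the paper's two-sentence argument --- in particular your discussion of the Case 3 saddle point and of the uniformity in $p_{min}$ needed to order the three limits addresses genuine gaps that the paper leaves entirely implicit.
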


\begin{proof}

The proof of the result is obtained by virtue of applying a classical result due 
 due to to Norman \cite{Norman1972}, given in the Appendix \ref{sec:Appendix}, in the interest of completeness.

Norman theorem has been traditionally used to prove considerable amount of the results in the field of LA. 
In the context of game theoretical LA schemes, Norman theorem has been adapted by Lakshmivarahan and Narendra to derive similar convergence properties of the $L_{R-\epsilon P}$ \cite{lakshmivarahan1982learning} for the zero-sum game. It is straightforward  to verify that Assumptions (1)-(6) as required for Norman's result in the appendix are satisfied.
Thus, by further invoking Theorem \ref{thm:expectedValue} and Theorem \ref{thm:derivative} , the result follows.
\end{proof}

\section{Game Theoretical LA Algorithm based on the $S$-Learning with Artificial Barriers}
\label{sec:scheme_S_learning}

In this section, we give the update equations for the LA when the environment is of $S-$ type.

In the case of $S-$ type, the game is defined by two payoff matrices, $R$ and $C$ describing a deterministic feedback of player $A$ and player $B$ respectively.

All the entries of both matrices are deterministic numbers like in classical game theory settings.

 The environment returns $u_i^A(t)$: the payoff defined by the matrix $R$ for  player  $A$    at time $t$ whenever player $A$  question chooses an  action $i \in \{1,2\}$.

 The update rule for the  player $A$  that takes into account $u_i^A(t)$ is given by:
\begin{equation}
\label{eq:algorithmGain_S_Learnign}
\begin{aligned}
p_i(t+1) & =   p_i(t)+\theta u_i^A (p_{max}-p_i(t))\\
p_s(t+1) & =    p_s(t)+\theta u_i^A (p_{min}-p_s(t)) &\textrm{for } \quad s\ne i.
\end{aligned}
\end{equation}
\noindent
where $\theta$ is a learning parameter.

Note $u_i^A$ is the feedback for action $i$ of the  player $A$ which is one entry in the $i^th$ row of the matrix $R$, depending on the action of the player $B$.

Similarly we can define  $u_i^B(t)$ the payoff defined by the matrix $C$ for  player  $B$    at time $t$ whenever player $B$  question chooses an  action $i \in \{1,2\}$.
 
 For instance , if at time $t$, player $A$ takes action $1$ and player $B$ takes action $2$, then $u_1^A(t)=r_{12}$  and  $u_2^B(t)=c_{21}$.

The update rules for player $B$ can be obtained by analogy to those given for player $A$.

\begin{theorem}
We consider the update equations given by the $S-$ Learning  scheme given above in this Section.
For a sufficiently small $p_{min}$ approaching $0$,  and as $\theta \to 0$ and  as time goes to infinity:

$\begin{bmatrix} E(p_1(t)) & E(q_1(t)) \end{bmatrix}$  $\rightarrow$ $\begin{bmatrix}p_{\mathrm{opt}}^* &q_{\mathrm{opt}}^*\end{bmatrix}$

where $\begin{bmatrix}p_{\mathrm{opt}}^* &q_{\mathrm{opt}}^*\end{bmatrix}$ corresponds to a Nash equilibrium of the game.
\end{theorem}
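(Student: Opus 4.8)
The plan is to reduce the $S$-Learning case to the same Norman-theory machinery already invoked for the $L_{R-I}$ scheme, by showing that the $S$-Learning update in Eq.~\eqref{eq:algorithmGain_S_Learnign} produces an expected-increment vector field of exactly the same sign structure as the $L_{R-I}$ one analysed in Theorem~\ref{thm:expectedValue} and Theorem~\ref{thm:derivative}. First I would write $X(t)=\begin{bmatrix}p_1(t)&q_1(t)\end{bmatrix}\tp$ as before and compute $\bE[\Delta X(t)\mid X(t)]$. Because the feedback $u_i^A$ is now the deterministic entry $r_{ij}$ (or a random variable with that mean), each update again has the form $\Delta X(t)=\theta \widetilde{W}(X(t))$ plus a zero-mean noise term, so Norman's Assumptions (1)--(6) (bounded increments, $O(\theta)$ drift, Lipschitz coefficients, etc.) hold verbatim. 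The key observation is that carrying out the conditional expectation yields precisely the same $\widetilde{W}_1,\widetilde{W}_2$ as the functions $W_1,W_2$ derived in the proof of Theorem~\ref{thm:expectedValue}: in the $P$-type case the factor $r_{ij}$ entered as the Bernoulli reward probability multiplying the $L_{R-I}$ increment, whereas here the same $r_{ij}$ enters as the deterministic payoff multiplying the same increment — the algebra is identical. Hence $\widetilde{W}=W$, the stationary points are the same, and their locations relative to $X_{\rm opt}$ for Cases 1--3 are already characterised.

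Next I would invoke Theorem~\ref{thm:expectedValue} to assert that for $p_{min}$ small enough there is a unique stationary point $X^*$ with $W(X^*)=0$ and $|X^*-X_{\rm opt}|<\epsilon$, and Theorem~\ref{thm:derivative} to assert that $\partial W(X^*)/\partial X$ is negative definite at the stationary point corresponding to each (pure or mixed) Nash equilibrium that is actually stable. Combining these two facts, Norman's theorem (stated in Appendix~\ref{sec:Appendix}) gives that, as $\theta\to 0$ and $t\to\infty$, $\begin{bmatrix}\bE(p_1(t))&\bE(q_1(t))\end{bmatrix}$ concentrates at $X^*$, which approaches $\begin{bmatrix}p_{\rm opt}^*&q_{\rm opt}^*\end{bmatrix}$ as $p_{min}\to 0$. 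The only genuinely new verification is that the additional randomness in the $S$-type feedback (if one models $u_i^A$ as a bounded random variable with mean $r_{ij}$ rather than as a constant) still satisfies Norman's variance/boundedness conditions; since $u_i^A\in[0,1]$ and the increments are $O(\theta)$, the conditional variance is $O(\theta^2)$, which is exactly what the theorem requires, so this is immediate.

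I expect the main obstacle to be purely expository rather than mathematical: making precise the claim ``the $S$-Learning drift equals the $L_{R-I}$ drift'' without re-deriving all of $W_1,W_2$, and handling cleanly the degenerate sub-cases of Case 2 and Case 3 (the four pure-equilibrium sub-cases and the instability of the mixed point in Case 3) which in the $L_{R-I}$ proof were dispatched with ``the details are omitted.'' A careful statement would note that in Case 3 the deterministic feedback does not change the sign of $L$ or $L'$, so the mixed stationary point has $\det(\partial W/\partial X)<0$ and is a saddle, hence unstable, exactly as before, and the algorithm converges to one of the two stable pure equilibria depending on the initial condition. Beyond that, the result follows by direct appeal to the already-established Theorems~\ref{thm:expectedValue} and \ref{thm:derivative} together with Norman's theorem, so no new analytical tool is needed.
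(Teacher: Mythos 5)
Your proposal is correct and follows essentially the same route as the paper, which simply states that the $S$-Learning proof ``follows the same lines'' as the $L_{R-I}$ analysis and omits the details. Your key observation --- that the conditional expected increment is identical because $r_{ij}$ enters as a Bernoulli reward probability in the $P$-type case and as a deterministic multiplicative payoff in the $S$-type case, so $\widetilde{W}=W$ and Theorems~\ref{thm:expectedValue} and~\ref{thm:derivative} plus Norman's theorem apply unchanged --- is precisely the justification the paper leaves implicit.
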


\begin{proof}
The proofs of this theorem follows the same lines as the proofs give in Section \ref{sec:scheme_L_R_I} and are omitted here for the sake of brevity.
\end{proof}

\section{Experimental results}
\label{sec:simualations_L_RI}

In this Section, we focus on providing thorough experiments for $L_{R-I}$ scheme. Some experiments of $S-$ LA for handling   $S-$ type environments are given in the Appendix \ref{sec:appendix_B} that mainly aim to verify our theoretical findings.

To verify the theoretical properties of the proposed learning algorithm, we conducted several simulations that will be presented in this section.  By using different instances of the payoff matrices $R$ and $C$, we can experimentally cover the three cases referred to in Section \ref{sec:scheme_L_R_I}. 

\subsection{Convergence in Case 1}
We examine a case of the game where only one mixed Nash equilibrium exists meaning that  there is no Saddle Point in pure strategies. The game matrices $R$ and $C$ are given by:

\begin{equation}
R=\begin{pmatrix}0.2 & 0.6 \\ 0.4 & 0.5\end{pmatrix},
\label{eq:R1}
\end{equation}
\noindent

\begin{equation}
C=\begin{pmatrix} 0.4 & 0.25 \\ 0.3 & 0.6\end{pmatrix},
\label{eq:C1}
\end{equation}
\noindent

which admits $p_{opt}=0.6667$ and $q_{opt}=0.3333$.

We ran our simulation for $5 \times 10^6$ iterations, and present the error in Table \ref{table:error_Mixed} for different values of
$p_{max}$ and $\theta$ as the difference between $X_{\mathrm{opt}}$ and the mean over time of $X(t)$ after convergence\footnote{The mean is taken over the last $10\%$ of the total number of iterations.}. The high value for the number of iterations was chosen in order to eliminate the Monte Carlo error.
A significant observation is that the error monotonically decreases as $p_{max}$ goes towards 1 (i.e., when $p_{min} \to 0$). For instance, for $p_{max}=0.998$ and $\theta=0.001$, the proposed scheme yields an error of $5.27 \times 10^{-3}$, and further reducing $\theta=0.0001$ leads to an error of $3.34 \times 10^{-3}$.

\begin{table}[htp!]
  \caption{{Error for different values of $\theta$ and $p_{max}$, when $p_{opt}=0.6667$ and $q_{opt}=0.3333$ for the game specified by the $R$ matrix given by Eq. (\ref{eq:R1}) and the $C$ matrix given by Eq. (\ref{eq:C1}).}}
\[
\begin{array}{ | c | c | c | }
\hline
	p_{max} &  \theta=0.001 &  \theta=0.0001 \\ \hline
	0.990 & 1.77 \times 10^{-2} & 2.03 \times 10^{-2} \\ \hline
	0.991 & 1.71 \times 10^{-2} & 1.69 \times 10^{-2} \\ \hline
	0.992 & 1.33 \times 10^{-2} & 1.54 \times 10^{-2} \\ \hline
	0.993 & 1.32 \times 10^{-2} & 1.52 \times 10^{-2} \\ \hline
	0.994 & 1.18 \times 10^{-2} & 1.02 \times 10^{-2} \\ \hline
	0.995 & 1.17 \times 10^{-2} & 7.86 \times 10^{-3} \\ \hline
	0.996 & 8.50 \times 10^{-3} & 6.37 \times 10^{-3} \\ \hline
	0.997 & 5.57 \times 10^{-3} & 4.43 \times 10^{-3} \\ \hline
	0.998 & 5.27 \times 10^{-3} & 3.34 \times 10^{-3} \\ \hline
\end{array}
\]
 \label{table:error_Mixed}
\end{table}

\begin{figure}[htp!]
\centering
\includegraphics[width=8cm]{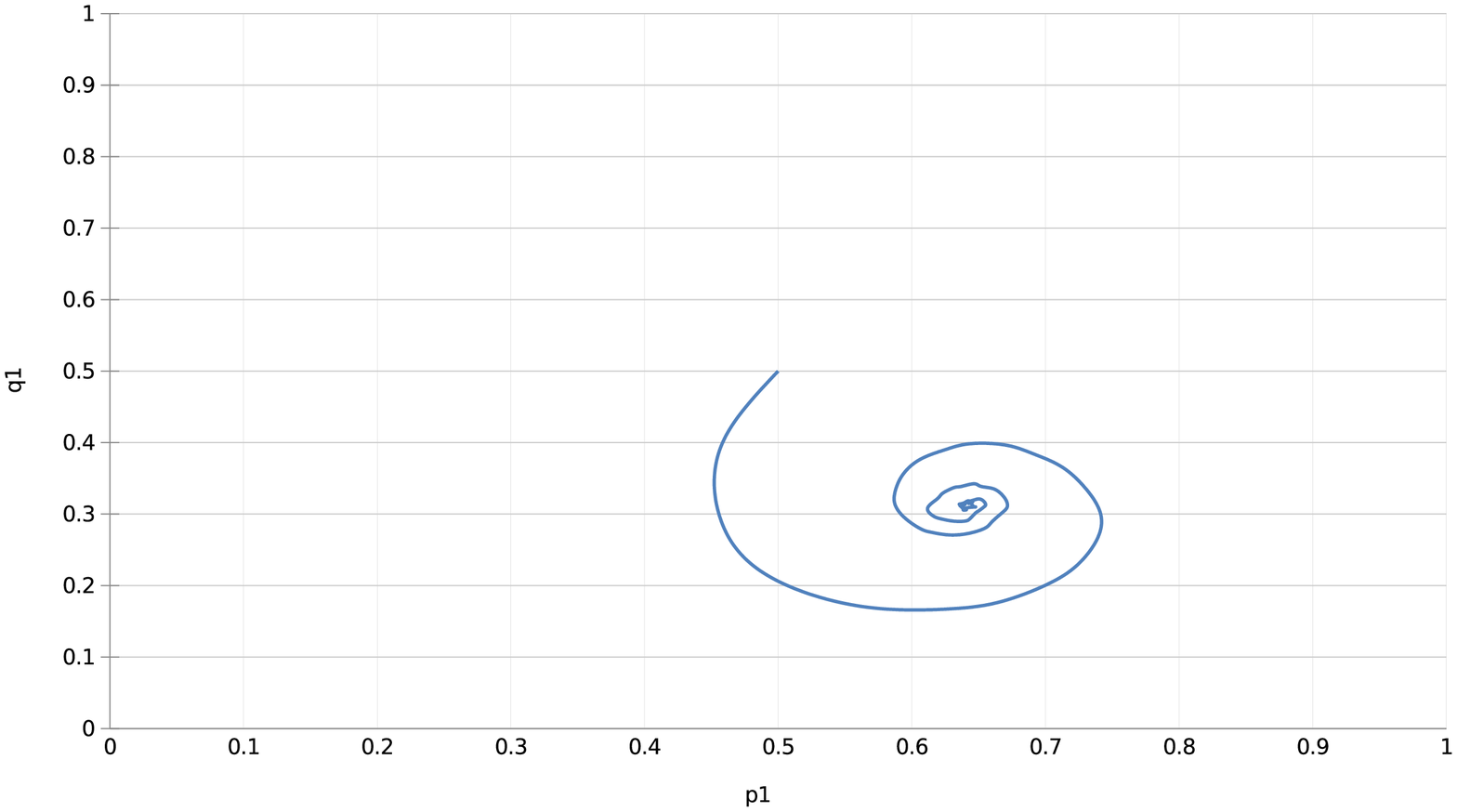}
\caption{Trajectory of  $[p_1(t), q_1(t)]\tp$ for the case of the $R$ matrix given by Eq. (\ref{eq:R1}) and the $C$ matrix given by Eq. (\ref{eq:C1}) with $p_{opt}=0.6667$ and $q_{opt}=0.3333$, and using $p_{max}=0.99$ and $\theta=0.01$.}
\label{fig:ensembleTrajectory_Mixed}
\end{figure}

The behavior scheme is illustrated in Figure \ref{fig:ensembleTrajectory_Mixed} showing the trajectory of the mixed strategies for both players (given by $X(t)$) for an ensemble of 1,000 runs using $\theta=0.01$ and $p_{max}=0.99$.

The trajectory of the ensemble enables us to notice the mean evolution of the mixed strategies. The spiral pattern results from one of the players adjusting to the strategy used by the other before the former learns by readjusting its strategy. The process is repeated, thus leading to more minor corrections until the players reach the Nash equilibrium.

The process can be visualized in Figure \ref{fig:trajectory_Mixed} presenting the time evolution  of the strategies of both players for a single experiment with $p_{max}=0.99$ and $\theta=0.00001$ over $3\times10^7$ steps. We observe an oscillatory behavior which vanishes as the players play for more iterations. It is worth noting that a larger value of $\theta$ will cause more steady state error (as specified in Theorem \ref{thm:expectedValue}), but it will also disrupt this behavior as the players take larger updates whenever they receive a reward. Furthermore,  decreasing $\theta$ results in a smaller convergence error, but also affects negatively the convergence speed as more iterations are required to achieve convergence.
Figure \ref{fig:trajectory_Mixed_smaller_theta} depicts the trajectories of the probabilities $p_1$ and $q_1$ for the same settings as those in Figure \ref{fig:trajectory_Mixed}.

\begin{figure}[htp!]
\centering
\includegraphics[width=8cm]{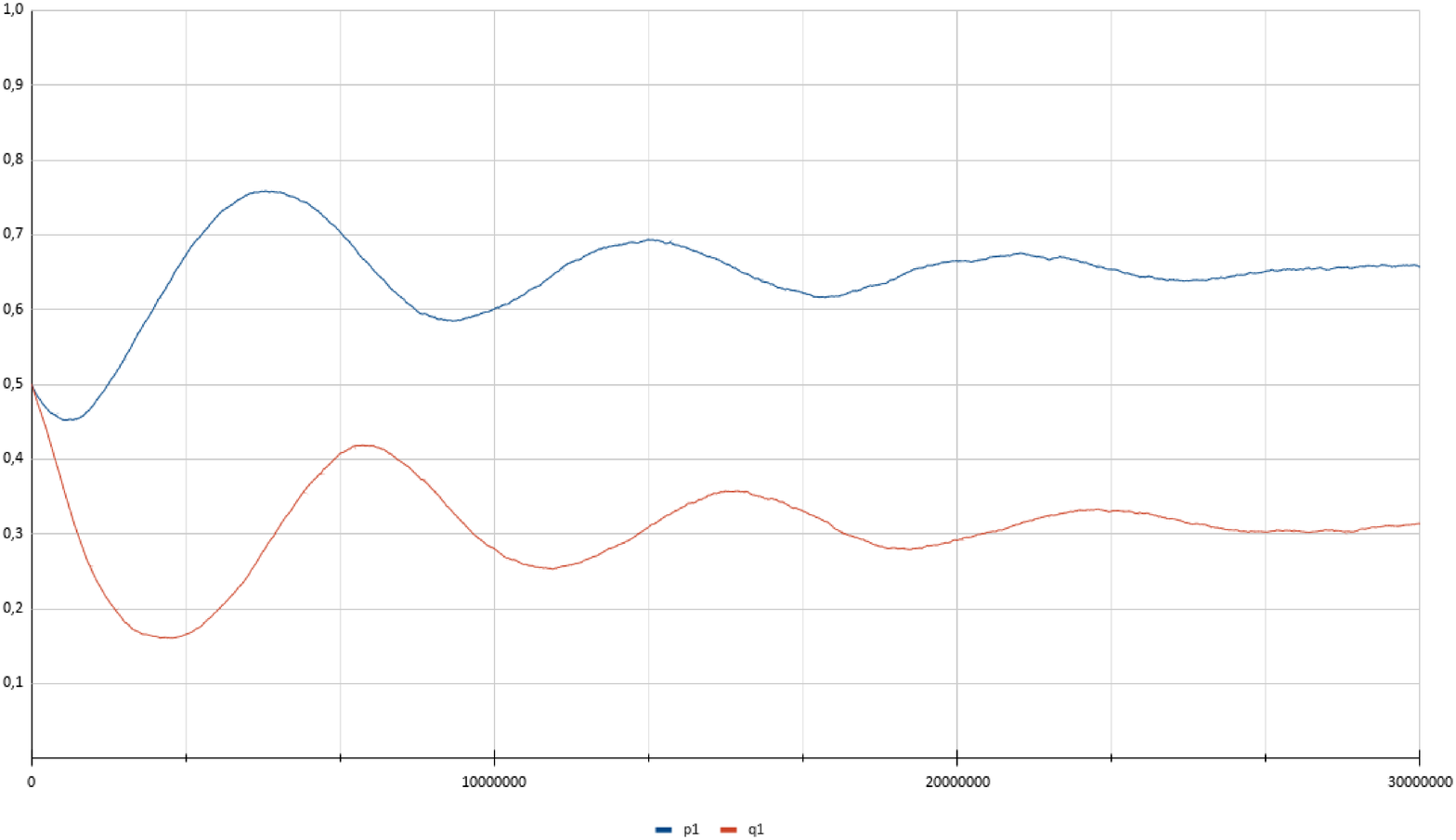}
\caption{Time Evolution  $X(t)$ for the case of the $R$ matrix given by Eq. (\ref{eq:R1}) and the $C$ matrix given by Eq. (\ref{eq:C1}) with $p_{opt}=0.6667$ and $q_{opt}=0.3333$, and using $p_{max}=0.99$ and $\theta=0.00001$.}
\label{fig:trajectory_Mixed}
\end{figure}

\begin{figure}[htp!]
\centering
\includegraphics[width=8cm]{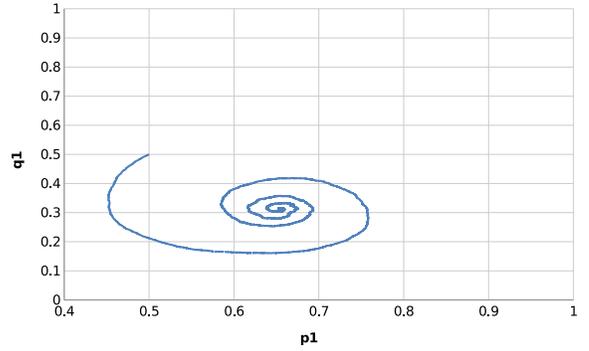}
\caption{Trajectory of $X(t)$ where
$p_{opt}=0.6667$ and $q_{opt}=0.3333$, using $p_{max}=0.99$ and $\theta=0.00001$.}
\label{fig:trajectory_Mixed_smaller_theta}
\end{figure}


Now, we turn our attention to the analysis of the the deterministic Ordinary Differential Equation (ODE) corresponding to our LA with barriers and plot it in Figure \ref{fig:ODE_Mixed}.
The trajectory of the ODE is conform with our intuition and the results of the LA  run in Figure \ref{fig:trajectory_Mixed_smaller_theta}.
The two ODE are given by:
\begin{equation}
\resizebox{0.9\hsize}{!}{$
\begin{aligned}
   \frac{dp_1}{dt}= W_1({X})
    = \ \ &  p_1 (p_{max} - p_1) D_1^A(q_1) + (1-p_1) (p_{min} - p_1) D_2^A(q_1),  \\
\end{aligned}
$}
\end{equation}

and,

\begin{equation}
\resizebox{0.9\hsize}{!}{$
\begin{aligned}
   \frac{dq_1}{dt}= W_1({X})
    = \ \ &  p_1 (p_{max} - p_1) D_1^A(q_1) + (1-p_1) (p_{min} - p_1) D_2^A(q_1),  \\
\end{aligned}
$}
\end{equation}

To obtain the ODE for a particular example, we need just to replace the entries of $R$ and $C$ in the ODE by their values. In this sense to plot the ODE trajectories we only need to know $R$ and $C$ and of course $p_{max}$.

\begin{figure*}[htp!]
\centering
\includegraphics[scale=0.4]{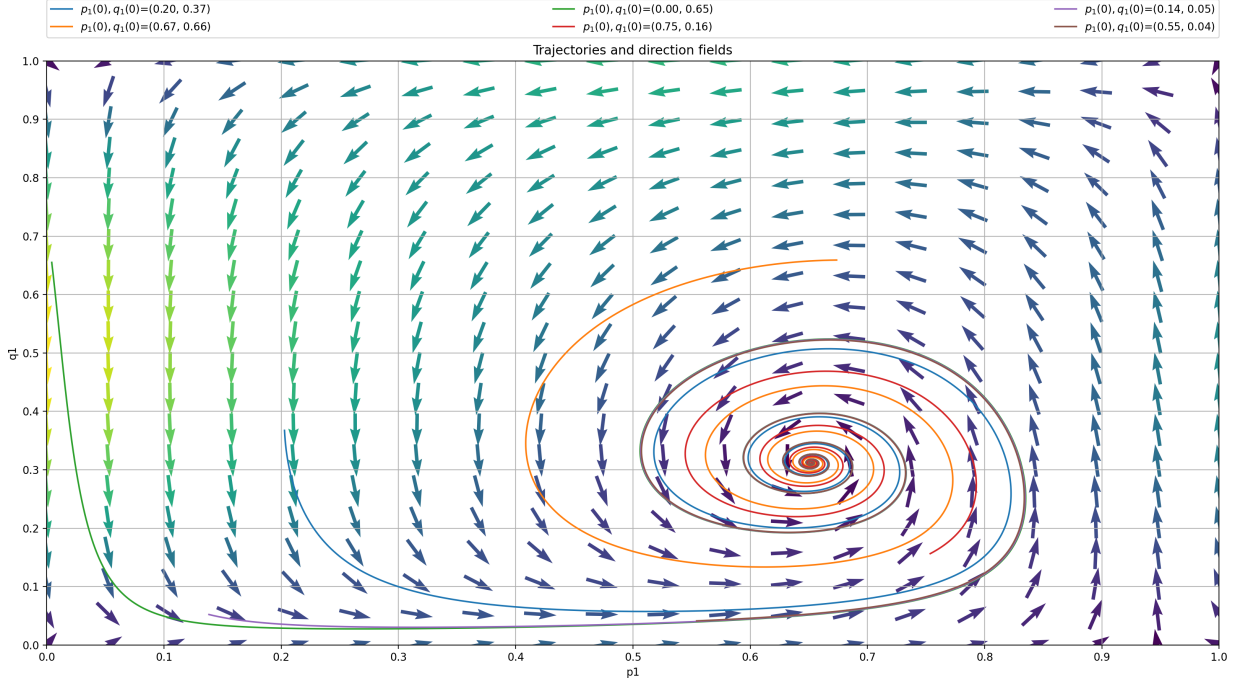}
\caption{Trajectory of  ODE  using $p_{max}=0.99$ for case 1.}
\label{fig:ODE_Mixed}
\end{figure*}

\subsection{Case 2: One Pure equilibrium}

At this juncture, we shall experimentally show that the scheme possess still plausible convergence properties even in case where there is a single saddle point in pure strategies and that our proposed LA will approach the optimal pure equilibria.
For this sake, we consider a case of the game where there is a single pure equilibrium which falls in the category of Case 2 with $p_{opt}=1$ and $q_{opt}=0$. The payoff matrices $R$ and $C$ for the games are given by:

\begin{equation}
R=\begin{pmatrix}0.7 & 0.9 \\ 0.6 & 0.8\end{pmatrix},
\label{eq:R2}
\end{equation}
\noindent

\begin{equation}
C=\begin{pmatrix} 0.6 & 0.8 \\ 0.8 & 0.9 \end{pmatrix},
\label{eq:C2}
\end{equation}
\noindent

We first show the convergence errors of our method  in Table \ref{tab:gameD1}. As in the previous simulation for Case 1, the errors are on the order to $10^{-2}$ for larger values of $p_{max}$. We also observe that steady state error is slightly higher compared to the previous case of mixed Nash described by Eq. \eqref{eq:R1} and Eq. \eqref{eq:C1} which is treated in the previous section. 

 \begin{table}[htp!]
  \caption{Error for different values of $\theta$ and $p_{max}$ for the game specified by the R matrix and the C matrix given by Eq. \eqref{eq:R2} and Eq. \eqref{eq:C2}.}
\[
\begin{array}{ | c | c | c | }
\hline
	p_{max} &  \theta=0.0001 &  \theta=0.00001 \\ \hline
	0.990 & 6.57 \times 10^{-2} & 6.51\times 10^{-2} \\ \hline
	0.991 & 5.88 \times 10^{-2} & 5.82 \times 10^{-2} \\ \hline
	0.992 & 5.30 \times 10^{-2} & 5.21 \times 10^{-2} \\ \hline
	0.993 & 4.67 \times 10^{-2} & 4.64 \times 10^{-2} \\ \hline
	0.994 & 4.00 \times 10^{-2} & 4.02 \times 10^{-2} \\ \hline
	0.995 & 3.36 \times 10^{-2} & 3.38 \times 10^{-2} \\ \hline
	0.996 & 2.68 \times 10^{-2} & 2.64 \times 10^{-2} \\ \hline
	0.997 & 2.04 \times 10^{-2} & 2.08 \times 10^{-2} \\ \hline
	0.998 & 1.40 \times 10^{-2} & 1.37 \times 10^{-2} \\ \hline
\end{array}
\]
\label{tab:gameD1}
\end{table}

We then plot the ODE for $p_{max}=0.99$ as shown in Figure \ref{fig:ODE_one_pure_099}. According to the ODE in Figure \ref{fig:ODE_one_pure_099}, we are expecting that the LA will converge towards the attractor of the ODE which corresponds to $(p^*,q^*)=0.917, 0.040)$ as $\theta$ goes to zero.
We see that $(p^*,q^*)=(0.917, 0.040)$ approaches $(p_{opt},q_{opt})=(1, 0)$ but there is still a gap between them.
This is also illustrated in
Figure \ref{fig:case2_te_pmax099}  where we also consistently observe that the LA converges towards $(p^*,q^*)=(0.916, 0.041)$ after running our LA for 30,000 iterations with an ensemble of 1,000 experiments.


\begin{figure*}[htp!]
\centering
\includegraphics[scale=0.4]{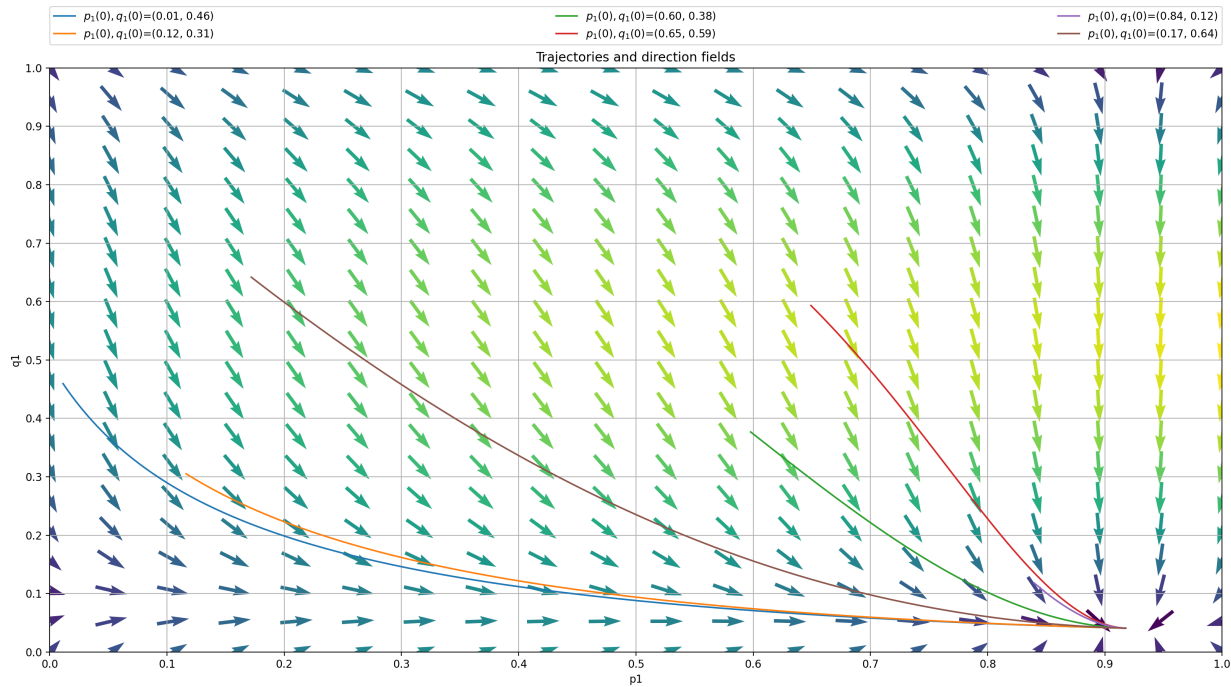}
\caption{Trajectory of  the deterministic ODE  using $p_{max}=0.99$ for case 2.}
\label{fig:ODE_one_pure_099}
\end{figure*}

\begin{figure}[htp!]
\centering
\includegraphics[width=8cm]{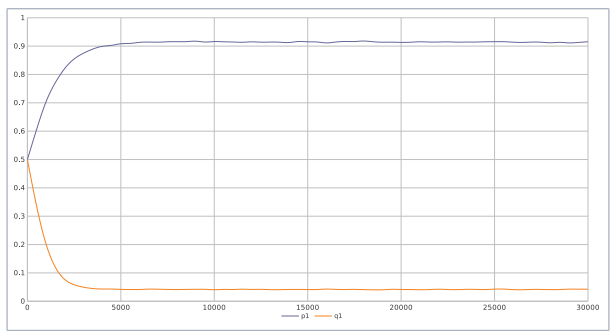}
\caption{Time evolution over time of $X(t)$ for $\theta=0.01$ and $p_{max}=0.99$ for the case of the $R$ matrix given Eq. \eqref{eq:R2} and for the $C$ matrix given by Eq. \eqref{eq:C2}.}
\label{fig:case2_te_pmax099}
\end{figure}

Observing the small dispersancy between 
between $(p^*,q^*)=(0.917, 0.040)$ and $(p_{opt},q_{opt})=(1, 0)$ from the ODE and from the LA trajectory as shown in Figure  \ref{fig:ODE_one_pure_099} and Figure 
\ref{fig:case2_te_pmax099} motivates us to choose even a larger value of $p_{max}$. Thus, we increase $p_{max}$ from $0.99$ to 
$0.999$ and observe the expected convergence results from the ODE in 
in Figure \ref{fig:ODE_one_pure_0.999}. We observe a single attraction point close  of the ODE close to the pure Nash equilibrium. We can read from the ODE trajectory that  $(p^*,q^*)=(0.991, 0.004)$  which is closer $(p_{opt},q_{opt})=(1, 0)$ than the previous case with a smaller $p_{max}$.

\begin{figure*}[htp!]
\centering
\includegraphics[scale=0.4]{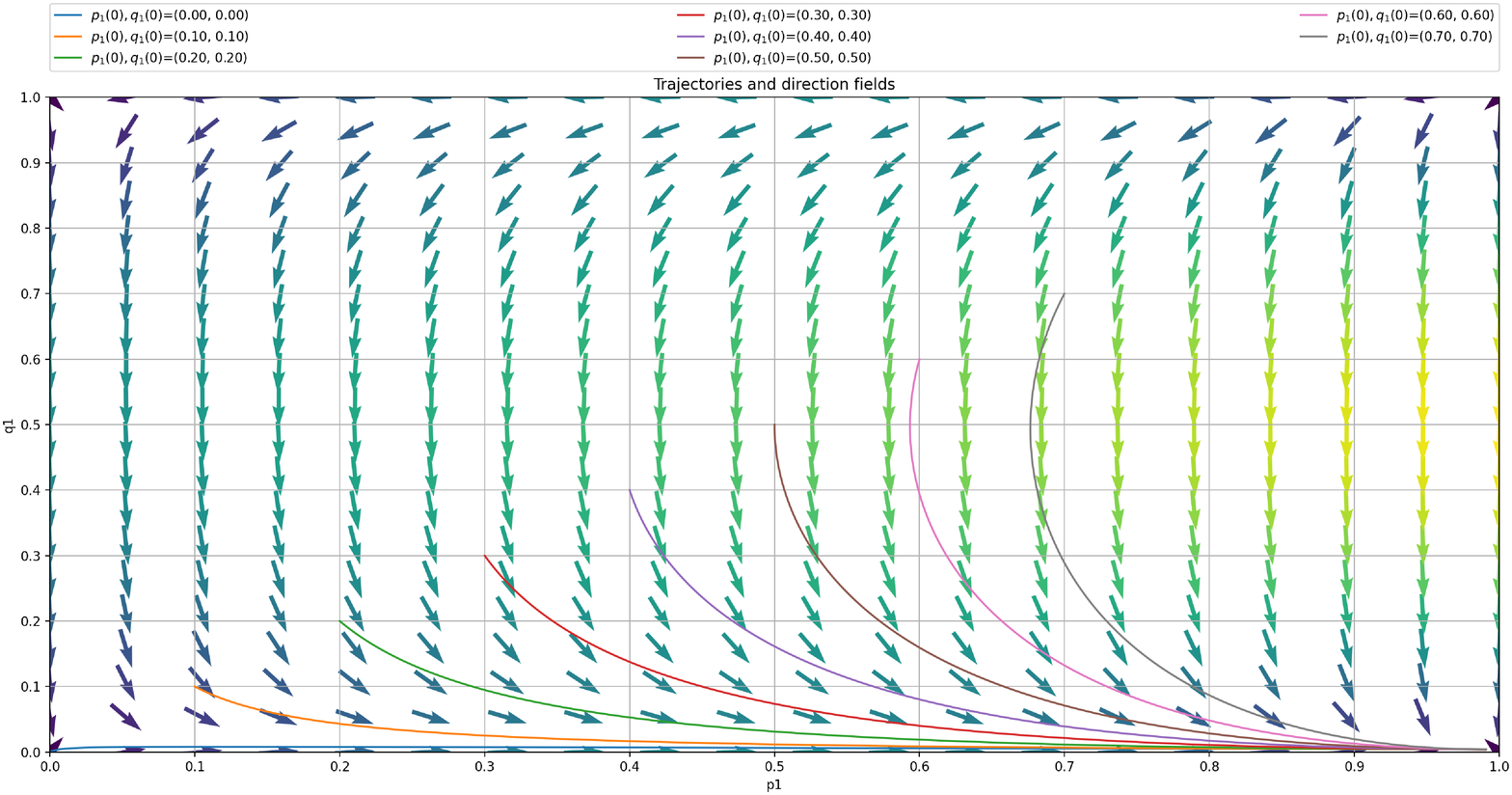}
\caption{Trajectory of  ODE  using $p_{max}=0.999$ for case 2.}
\label{fig:ODE_one_pure_0.999}
\end{figure*}

In Figure \ref{fig:D2a}, we depict the time evolution of the two components of the vector $X(t)$ using the proposed algorithm for an ensemble of 1,000 runs. In the case of having a Pure Nash equilibrium, there is no oscillatory behavior as when a player assigns more probability to an action, since the other player reinforces the strategy. However, Figure \ref{fig:D2a} could mislead the reader to believe that the LA method has converged to a pure strategy for both players. In order to clarify that we are not converging to an absorbing state for the player A, we provide Figure \ref{fig:D2b} which zooms on Figure \ref{fig:D2a} around the region where the strategy of player A has converged in order to visualize that its maximum first action probability is limited by $p_{max}$, as per the design of our updating rule.
Similarly, we zoom on the evolution of the first action probability of player B in Figure \ref{fig:D2c}. We observe that the first action instead to converging to zero as it would be if we did not have absorbing barriers, its rather converges to a small probability limited by $p_{min}$ which approaches zero. 
Such propriety of evading lock in probability even for pure optimal strategies and which emanates from the ergodicity of our  $L_{R-I}$ scheme with absorbing barriers  is a desirable property specially 
when the payoff matrices are time-varying and thus the optimal equilibrium point might change over time. Such a case deserves a separate study to better understand the behavior of the scheme and to also understand the effect of the tuning parameters and how to control and vary them in this case to yield a compromise between learning and forgetting stale information.

\begin{figure}[ht]
\begin{subfigure}{.5\textwidth}
  \centering
\includegraphics[scale=0.3]{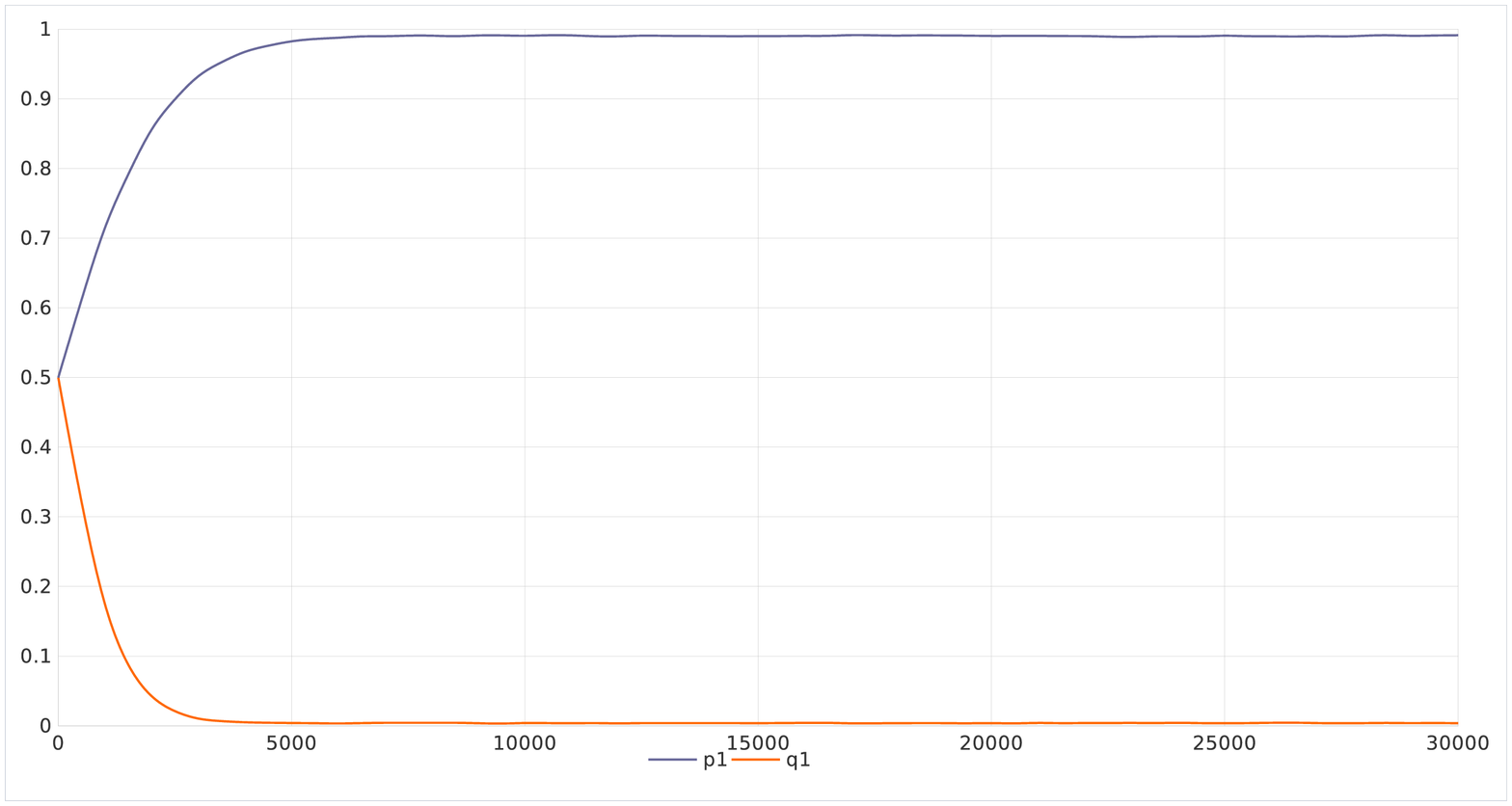}
  \caption{Evolution over time of $X(t)$.}
  \label{fig:D2a}
\end{subfigure}
\begin{subfigure}{.5\textwidth}
  \centering
\includegraphics[scale=0.3]{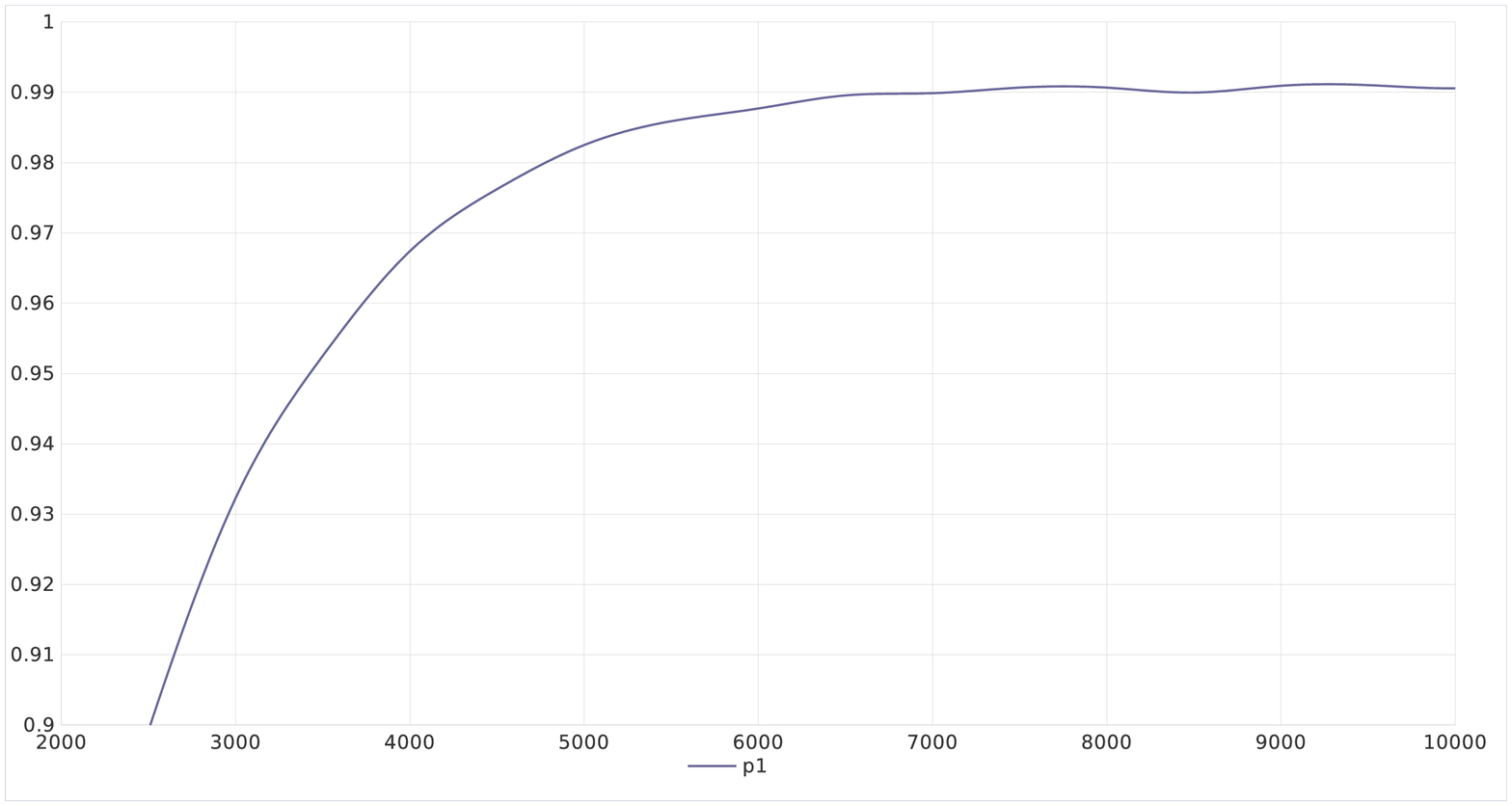}
  \caption{Zoomed version for player A strategy.}
  \label{fig:D2b}
\end{subfigure}

\begin{subfigure}{.5\textwidth}
  \centering
\includegraphics[scale=0.3]{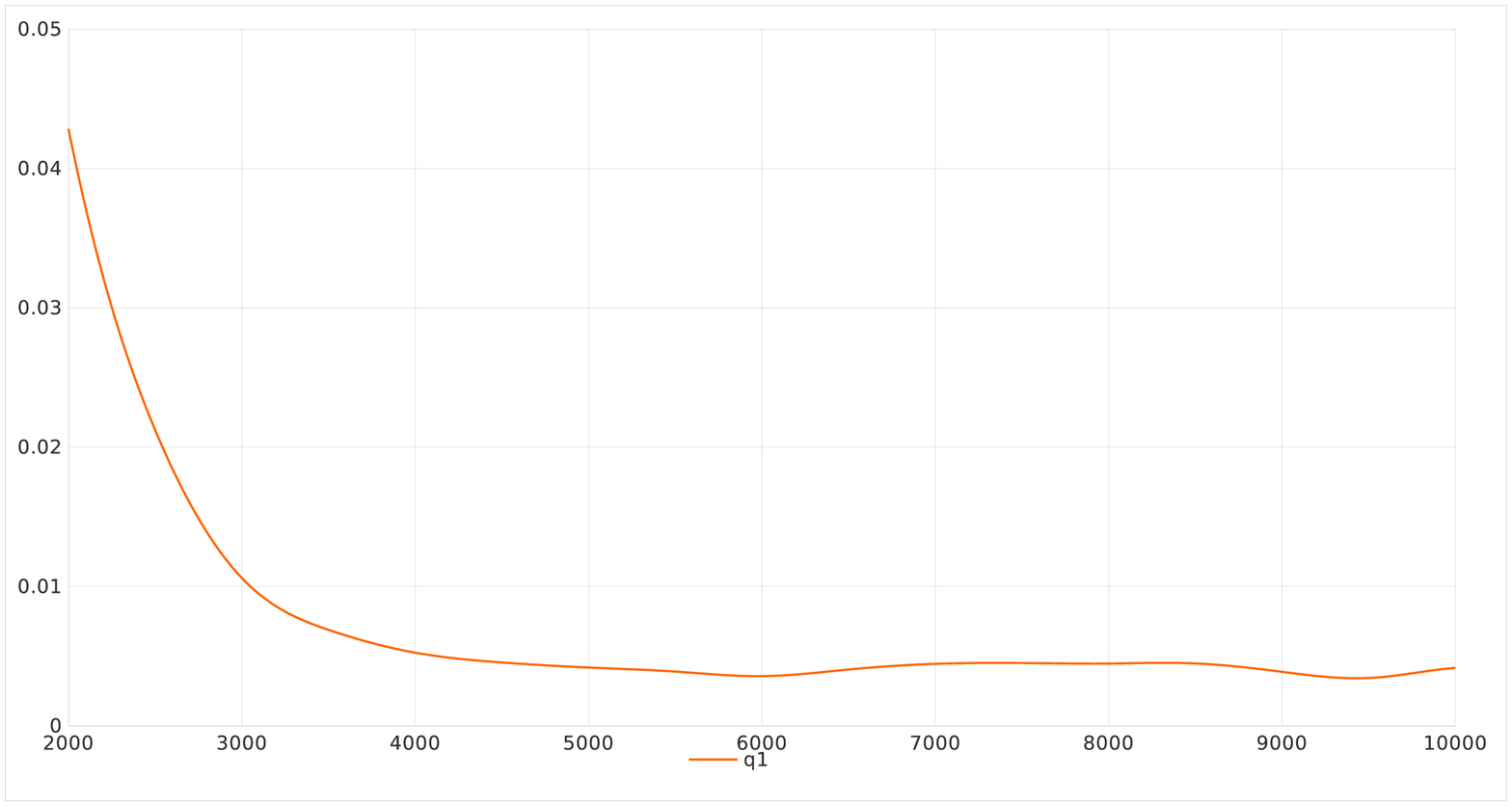}
  \caption{Zoomed version for player B strategy.}
  \label{fig:D2c}
\end{subfigure}
\caption{The figure shows a) the evolution over time of $X(t)$ for $\theta=0.01$ and $p_{max}=0.999$ when applied to game with payoffs specified by the $R$ matrix and the $C$ matrix given by Eq. \eqref{eq:R2} and Eq. \eqref{eq:C2}., and b) is a zoomed version around player A strategy c) and is a zoomed version around  player B strategy.}
\label{fig:timeTrajectoryD2}
\end{figure}

Figure \ref{fig:timeTrajectoryD2} depicts the time evolution of the probabilities for each player,  with $\theta=0.01$, $p_{max}=0.999$ and for an ensemble with 1,000 runs.

\subsection{Case 3: 2 Pure equilibria and 1 mixed}

Now, we shall consider the last case 3.

As an instance of case 3, we consider the payoff matrices $R$ and $C$ given by:

\begin{equation}
R=\begin{pmatrix}0.3 & 0.1 \\ 0.2 & 0.3\end{pmatrix},
\label{eq:R3}
\end{equation}
\noindent

\begin{equation}
C=\begin{pmatrix} 0.3 & 0.2 \\ 0.1 & 0.2 \end{pmatrix},
\label{eq:C3}
\end{equation}

In Figure \ref{fig:Case3_trajectories_LA}, we plot 9 trajectories for the LA for a number of iterations is 1,000,000.
We observe that depending on the initial conditions, our LA  converges to one of the two pure equilibria which is usually the closest to the starting point. We have also performed extensive simulations with initial values $(0.5, 0.5)$ of the probabilities and we found that almost $50\%$ of the time the LA converges to the Nash equilibrium close to $(1,1)$ and 
$50\%$ close to $(0,0)$.
As a future work, we would like to explore how to push the LA to favor one of the two equilibria as there is usually an equilibrium that is superior to the other, and thus it is more desirable for both players to converge to the superior Nash equilibrium.

\begin{figure}[htp!]
\centering
\includegraphics[width=9cm]{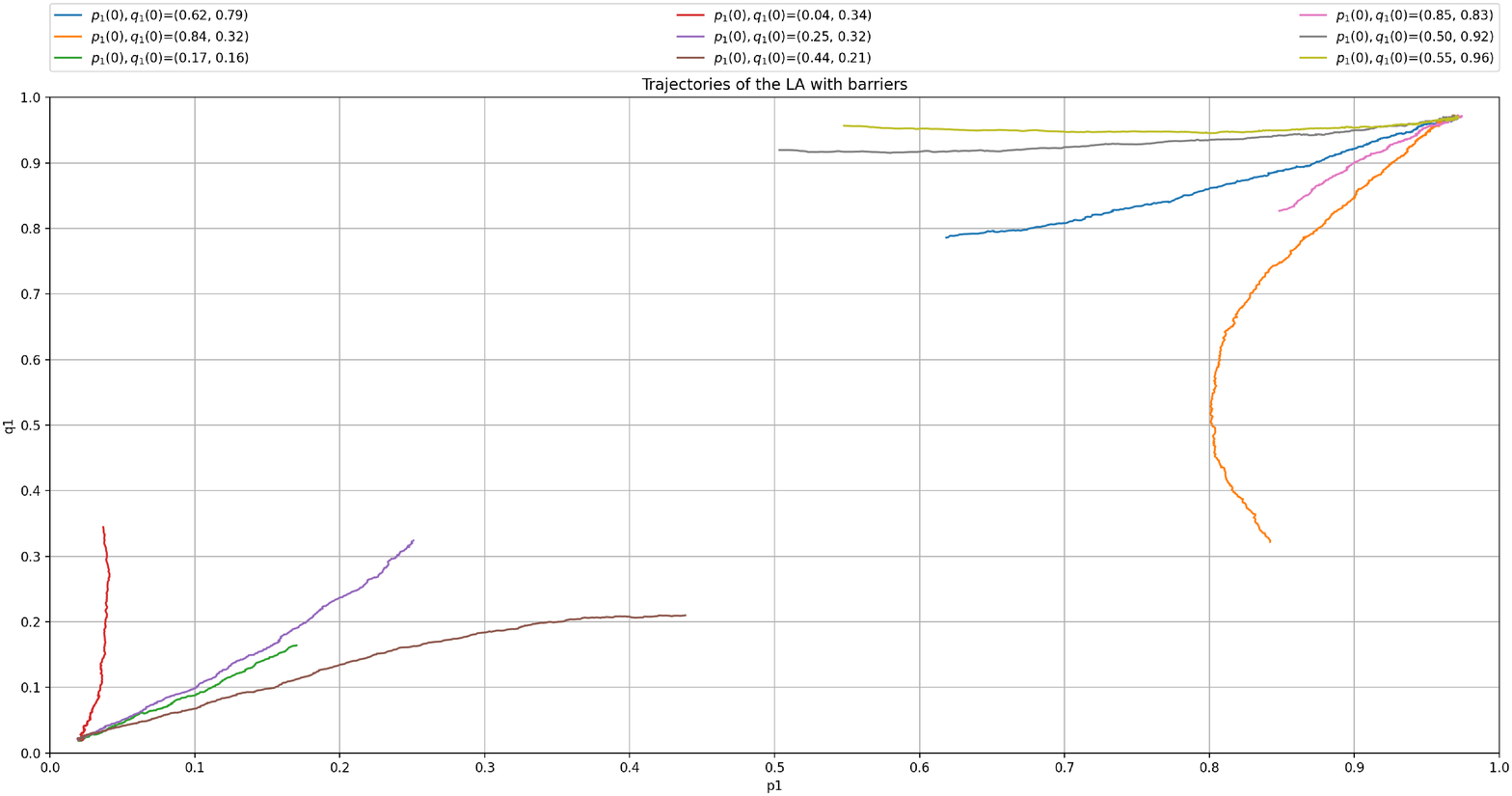}
\caption{9 Trajectories of  the LA with barriers starting from random initial point with $p_{max}=0.99$ and $\theta=0.0001$.}
\label{fig:Case3_trajectories_LA}
\end{figure}

We plot the ODE corresponding to our LA for case 3 in Figure \ref{fig:ODE_two_pure}. We can see two attractions points which approach the two Nash equilibria.

\begin{figure*}[ht]
\includegraphics[scale=0.4]{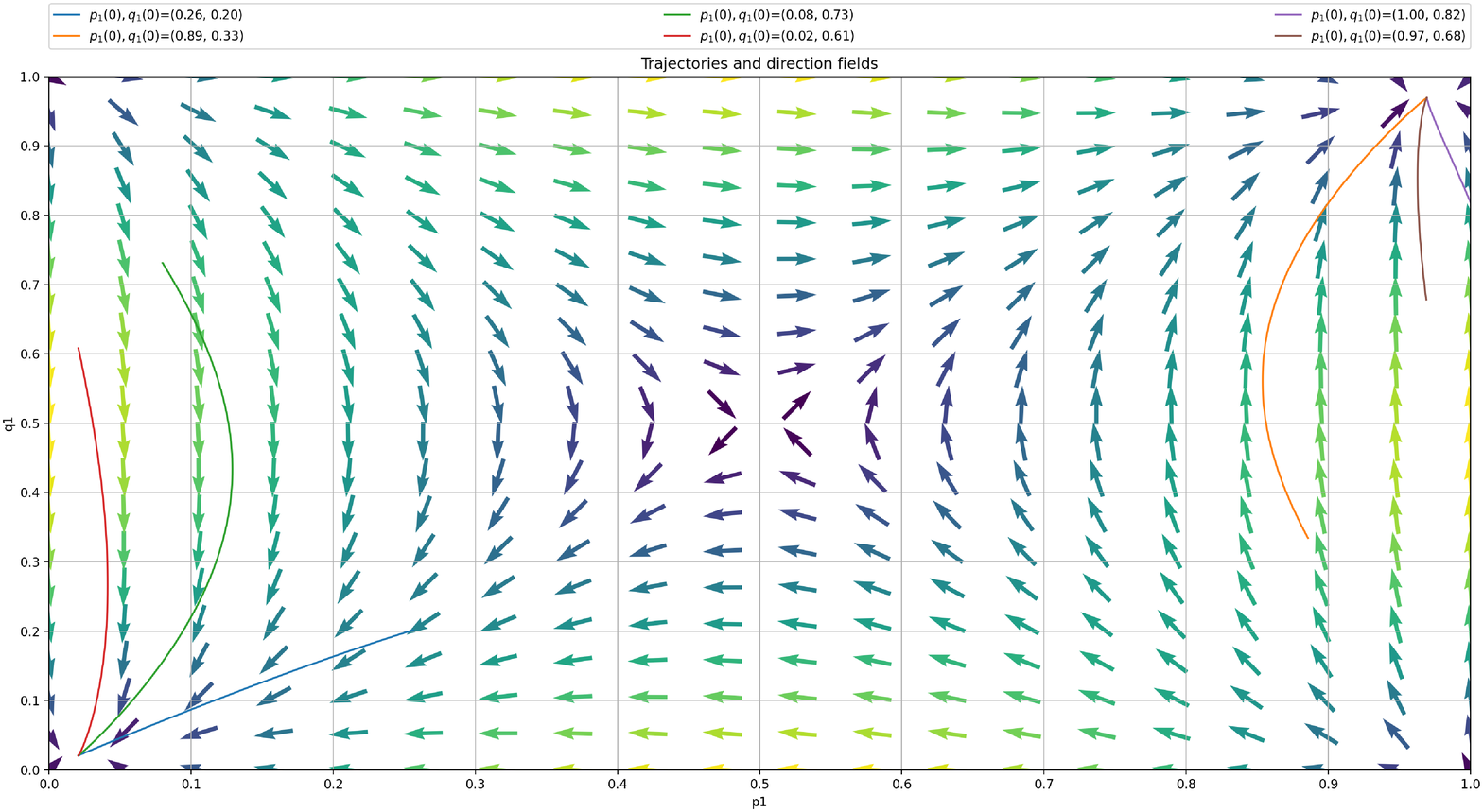}
\caption{Trajectory of ODE  using $p_{max}=0.99$ for case 3.}
\label{fig:ODE_two_pure}
\end{figure*}

\begin{table}[htp!]
  \caption{Error for different values of $\theta$ and $p_{max}$ for the game specified by the $R$ matrix and the $C$ matrix given by Eq. \eqref{eq:R3} and Eq. \eqref{eq:C3}.}
\[
\begin{array}{ | c | c | c | }
\hline
	p_{max} &  \theta=0.0001 &  \theta=0.00001 \\ \hline
	0.990 & 3.08 \times 10^{-2} & 2.06\times 10^{-2} \\ \hline
	0.991 & 2,76 \times 10^{-2} & 2.76 \times 10^{-2} \\ \hline
	0.992 & 1.64 \times 10^{-2} & 2.43 \times 10^{-2} \\ \hline
	0.993 & 1.42 \times 10^{-2} & 2.12 \times 10^{-2} \\ \hline
	0.994 & 1.85 \times 10^{-2} & 1.21 \times 10^{-2} \\ \hline
	0.995 & 0.0 & 1.53 \times 10^{-2} \\ \hline
	0.996 & 0.0 & 1.21 \times 10^{-2} \\ \hline
	0.997 & 0.0 & 0.0 \\ \hline
	0.998 & 0.0 & 0.0 \\ \hline
	0.999 & 0.0 & 0.0 \\ \hline
\end{array}
\]
\label{tab:case3}
\end{table}

Although for $p_{max} \neq 1$, w our scheme  is in theory ergodic and not absorbing, this is not the case in practice as shown in the simulation reported in Table \ref{tab:case3}. In fact for  $\theta=0.0001$ and as $p_{max}$ becomes larger or equal to $0.995$, we observe that the error is zero meaning that the LA has converged already to an absorbing state! This lock in probability phenomenon is due to the limited accuracy of the machine and limitations of the  random number generator. For smaller  $\theta=0.00001$, we expect that the LA will approximate better the ODE. Indeed, this is the case the absorption this time does not happen for $p_{max}=0.995$ and $p_{max}=0.996$ as in the previous case, but happen for only $p_{max}$ larger or equal to $p_{max}=0.997$.

Solving the ODE for $p_{max}=0.999$,  gives two solutions, namely, $(p^*,q^*)=(0.99699397,0.99699397)$ and $(0.00200603,0.00200603)$ which approach $(p_{\mathrm{opt}},q_{\mathrm{opt}})=(1,1)$ and $(p_{\mathrm{opt}},q_{\mathrm{opt}})=(0,0)$ respectively.

Solving the ODE for $p_{max}=0.998$,  gives two solutions $(p^*,q^*)=(0.99397576,0.99397576)$ and $(0.00402424,0.00402424)$.

While solving the ODE for $p_{max}=0.997$,  gives $(p^*,q^*)=(0.99094517,0.99094517)$ and $(0.00605483,0.00605483)$.

\subsection{Real-life Application Scenario}
\label{sec:Applications}

The application of game theory in cybersecurity is a promising research area attracting lots of attention \cite{sokri2020game, fielder2020, do2017}. Our LA-based solution is well suited for that purpose. In cybersecurity, algorithms that can converge to mixed equilibria are preferred over those that get locked into pure ones since randomization reduces an attacker's predictive capability to guess the implemented strategy of the defender. For example, let us consider a repetitive two-person security game comprising of a hacker and network administrator. The hacker intends to disrupt the network by launching a Distributed Denial of Service attack (DDOS) of varying magnitudes that could be classified as high or low. The administrator can use varying levels of security measures to protect the assets.
We can assume that the adoption of a strong defense strategy by the defender has an extra cost compared to a low one. Similarly, the usage of a high magnitude  attack strategy by the attacker has a higher cost compared to a low magnitude attack strategy.
Another example of a security game is the jammer and transmitter game \cite{vadori2015jamming} where 
a jammer is trying to guess the communication channel of the transmitter to interfere and block the communication.
The transmitter chooses probabilistically a channel to transmit over and the jammer choose probabilistically a channel to attack. Clearly converging to pure strategies is neither desirable by the jammer nor by the transmitter as it will give a predictive advantage to  the opponent.

\section{Conclusion}
\label{sec:Conclusions}

Learning Automata (LA) with artificially {\em absorbing}
were first introduced in  1980s \cite{Oommen1986} and have been recently adopted in Estimator LA \cite{zhang2013ACPA, zhang2016formal,yazidi2019hierarchical}. In this paper, we rather propose a LA with artificially {\em non-absorbing} that is able to solve game theoretical problems. The scheme is able to converge  to the game's Nash equilibrium under limited information that has clear advantages over the well-known LA solution for game theoretical due to Sastry et al. \cite{sastry1994decentralized}. In fact, the latter family of solution \cite{sastry1994decentralized} which has found a huge number of applications are only able to only converge to pure strategies and fail to converge to optimal mixed equilibrium. This presents clear disadvantage specially in cases where no Saddle Point exists for a pure strategy and thus the only Nash equilibrium of the game is a mixed one. 
Our scheme is an ergodic one and illustrates a design by which an inherently absorbing scheme, in our case, Linear Reward-Inaction ($L_{R-I}$), is rendered ergodic. Interestingly, while being able to solve the mixed Nash equilibrium case, our scheme maintains the plausible properties of the original $L_{R-I}$ as it is able to converge to a near-optimal  to the pure strategies in the probability simplex whenever a Saddle Point exists for pure strategies.
Furthermore, we also provide a general $S-$type learning scheme for handling continuous feedback and not necessarily binary.
As a future work, we would like to extend our scheme to Stackelberg games which are often employed in security and that assume that the defender deploys a mixed strategy that can be fully observed by the attacker who will optimally reply to it.  The extension would be interesting but from being obvious.


\bibliographystyle{IEEEtran}
\begin{IEEEbiography}[{\includegraphics[width=1in,height=1.25in,clip,keepaspectratio]{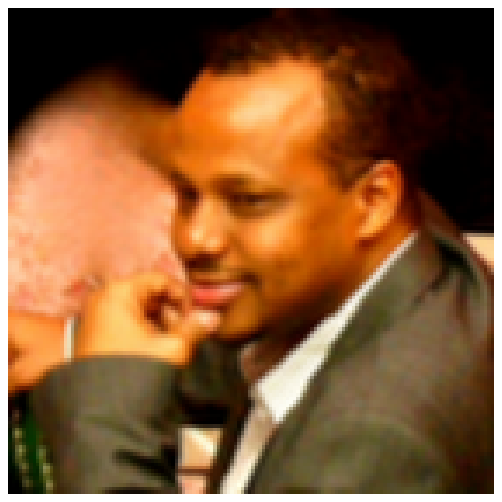}}]{Ismail Hassan}
 received the M.Sc. degree in network and system administration from the University of Oslo, Oslo, Norway, in 2005.
 In 2005, he joined the Oslo University College (OsloMet), Oslo Metropolitan University, Oslo, as a Senior System and a Network Engineer, and after four years, transitioned to the position of Assistant Professor. He is currently an Assistant Professor with OsloMet. His field of interests includes cybersecurity, networking technologies, operating systems, DevSecOps, teaching, and learning methods. 
\end{IEEEbiography}

\begin{IEEEbiography}[{\includegraphics[width=1in,height=1.25in,clip,keepaspectratio]{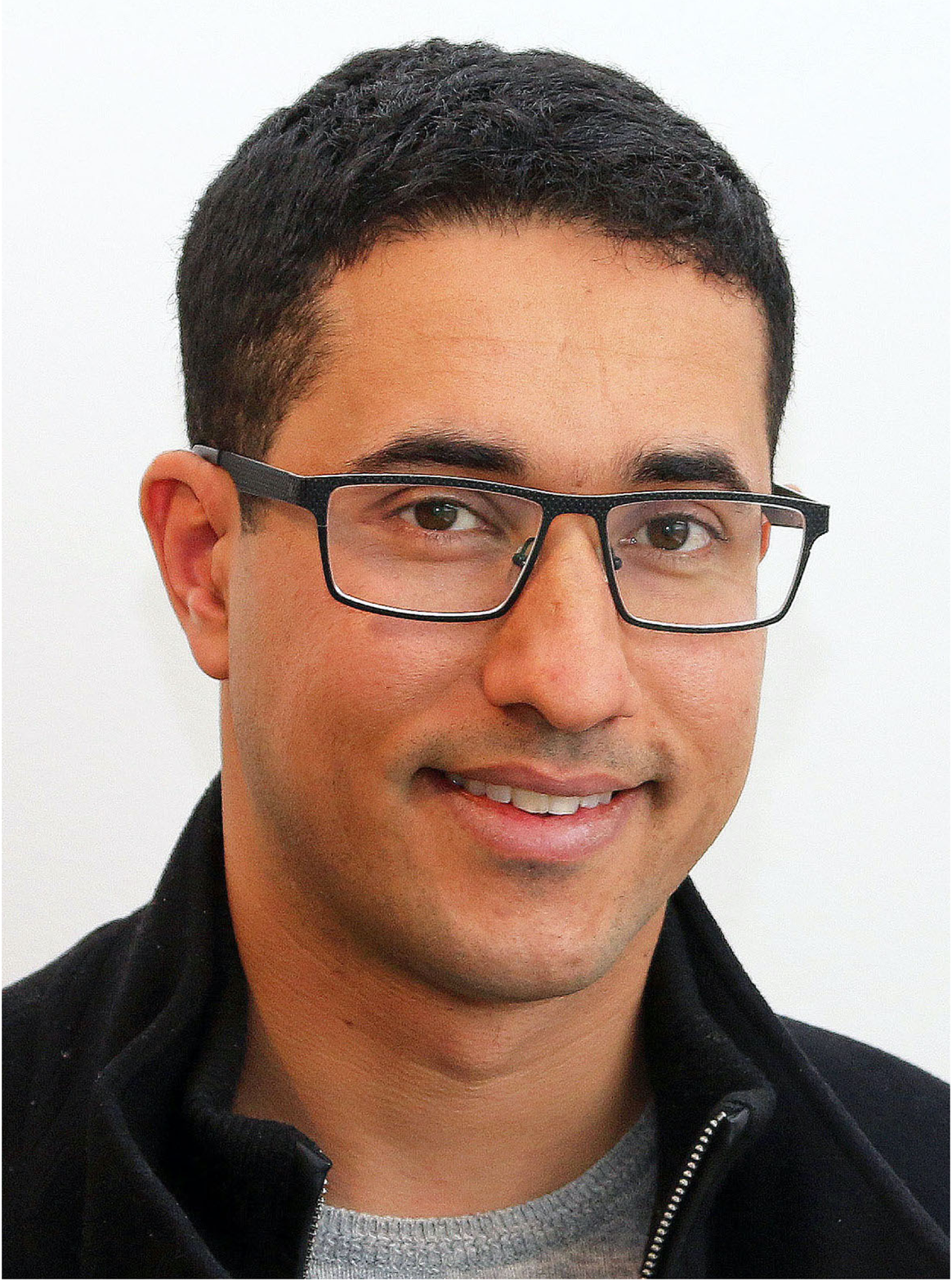}}]{Anis Yazidi}
 received the M.Sc.and Ph.D. degrees from the University of Agder, Grimstad, Norway, in 2008 and 2012, respectively. He was a Researcher with Teknova AS, Grimstad, Norway. From 2014 til 2019 he was an associate professor with the Department of Computer Science, Oslo Metropolitan University, Oslo, Norway.  He is currently a Full Professor with the same departement where he is leading the research group in Applied Artificial Intelligence. He is also Professor II with the Norwegian University of Science and Technology (NTNU), Trondheim, Norway. His current research interests include machine learning, learning automata, stochastic optimization, and autonomous computing.
\end{IEEEbiography}

\begin{IEEEbiography}[{\includegraphics[width=1in,height=1.25in,clip,keepaspectratio]{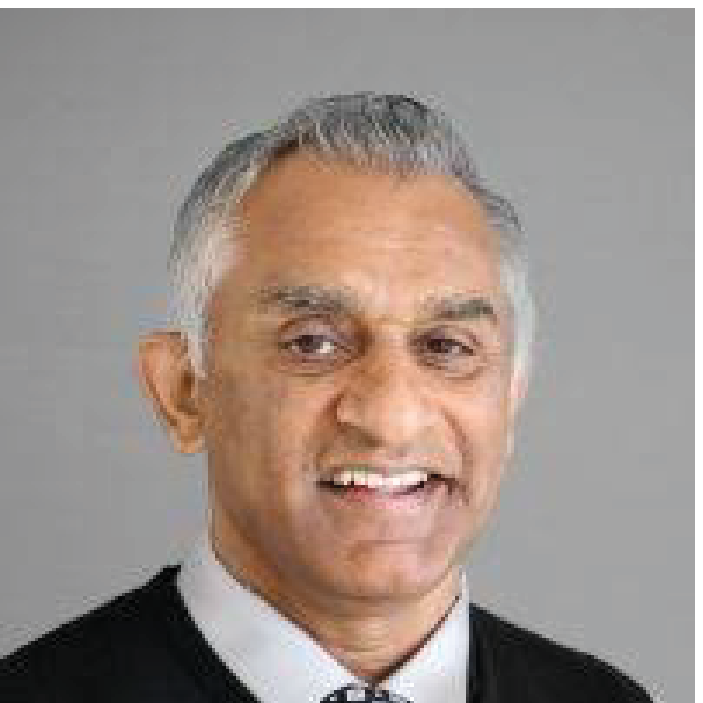}}]{B. John Oommen}
was born in India in 1953. He received his Bachelor of Technology in Electrical
Engineering at the Indian Institute of Technology in Madras, India
in 1975. He then pursued his Master of Engineering degree at the
Indian Institute of Science in Bangalore, India receiving his
degree in 1977. At both these institutions, he won the medal for being
the Best Graduating Student. He received a Master of Science
degree in 1979, and a Ph.D. in Electrical Engineering in 1982,
both from Purdue University, Indiana, USA.

In 2003, Dr. Oommen was nominated as a Fellow of
the Institute of Electrical and Electronic Engineers (IEEE) for
research in a subfield of Artificial Intelligence, namely in Learning Automata. He is currently a Life Fellow of the IEEE.  He was also
nominated as a Fellow of the International Association of Pattern
Recognition (IAPR) in August 2006 for contributions to
fundamental and applied problems in Syntactic and Statistical
Pattern Recognition. He has served on the editorial board of the
journals {\em IEEE Transactions on Systems, Man and Cybernetics}, and
{\em Pattern Recognition}. Dr. Oommen has been teaching in the School of
Computer Science at Carleton University since 1981, and was
elevated to be a Chancellor's Professor at Carleton
University in 2006. He has published more than 485 refereed publications, many of which have been award-winning. He has also won Carleton University's Research Achievement
Award four times, in 1995, 2001, 2007 and 2015.
\end{IEEEbiography}

\begin{appendices}
\section{Norman theorem}
\label{sec:Appendix}

\subsection{Norman theorem}
\label{sec:appendix_A}

\begin{theorem}
\label{thm:Norman}
Let $X(t)$  be a stationary Markov process dependent on a constant parameter $\theta \in [0,1]$. Each $X(t) \in I$, where $I$ is  a  subset  of  the  real  line.  Let $\Delta X(t)=X(t+1)-X(t)$. The following are assumed to hold:

\begin{enumerate}
\item I is compact.
\item $E [\Delta X(t) | X(t)=y]= \theta w(y)+ O(\theta^2)$
\item $Var [\Delta X(t) | X(t)=y]= \theta ^2 s(y)+ o(\theta ^2)$
\item $E [\Delta X(t)^3 | X(t)=y]=  O(\theta ^3)$
where $sup_{y \in I} \frac{O(\theta^k)}{\theta^k}< \infty$ for $K=2,3$ and $sup_{y \in I} \frac{o(\theta^2)}{\theta^2} \rightarrow 0$ as $\theta \rightarrow 0$.
\item $w(y)$ has a Lipschitz derivative in $I$.
\item $s(y)$ is Lipschitz $I$.
\end{enumerate}

If Assumptions (1)-(6) hold, $w(y)$ has a unique root $y^*$ in $I$ and
$\frac{d w}{d y}  \bigg|_{y=y^*} \le 0$ then
\begin{enumerate}

\item $var [\Delta X(t) | X(0)=x]=0(\theta)$ uniformly for all $x \in I$ and $t \ge 0$.
For any $x \in I$,  the differential equation $\frac{d y(\tau)}{d \tau}=w(y(t))$ has a unique solution  $y(\tau)=y(\tau,x)$ with $y(0)=x$  and	$E [\delta X(t) | X(0)=x]=y( t \theta)+O(\theta)$ uniformly for all $x \in I$ and  $t \ge 0$.

\item $\frac{X(t)-y(t \theta)}{\sqrt \theta}$ has a normal distribution with zero mean and finite variance as $\theta \rightarrow 0$ and $t \theta \rightarrow \infty$.

\end{enumerate}

\end{theorem}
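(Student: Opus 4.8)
The statement is the classical slow-learning (small-step) approximation theorem, and I would prove it in two stages matching its two conclusions: a functional law of large numbers identifying the deterministic ODE limit together with an $O(\theta)$ variance bound (Conclusion 1), and a central limit theorem for the $\sqrt{\theta}$-scaled fluctuations (Conclusion 2). The common engine is a moment analysis of the increments $\Delta X(t)$ obtained by conditioning through the Markov property and Taylor-expanding $w$ and $s$ under the smoothness Assumptions (5)--(6), combined with a martingale decomposition of the noise term.

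\textbf{Deterministic approximation and variance bound.} First I would settle existence and uniqueness of the ODE solution $y(\tau,x)$: since $I$ is compact (Assumption 1) and $w$ has a Lipschitz derivative (Assumption 5), $w$ is in particular Lipschitz, so Picard--Lindel\"of applies. Writing $m(t)=E[X(t)\mid X(0)=x]$ and using Assumption 2 with the tower property gives $m(t+1)-m(t)=\theta\,E[w(X(t))\mid X(0)=x]+O(\theta^2)$, a noisy Euler discretisation of $\dot y=\theta w(y)$. Introducing the error $e(t)=m(t)-y(t\theta)$ and Taylor-expanding $w$ to second order yields a recursion of the form $|e(t+1)|\le |e(t)|(1+\theta L)+C\theta\,\mathrm{Var}[X(t)\mid X(0)=x]+O(\theta^2)$, where the variance term arises from replacing $E[w(X(t))]$ by $w(m(t))$. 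In parallel, summing the per-step contributions of Assumption 3 over the $O(1/\theta)$-length horizon gives $\mathrm{Var}[X(t)\mid X(0)=x]=O(\theta)$. Feeding this back and applying a discrete Gronwall inequality produces the uniform estimate $m(t)=y(t\theta)+O(\theta)$ of Conclusion 1.

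\textbf{Central limit theorem.} For Conclusion 2 I would study the rescaled fluctuation $Z_\theta(t)=(X(t)-y(t\theta))/\sqrt{\theta}$. Decomposing the increment as $\Delta X(t)=\theta w(X(t))+N(t)+O(\theta^2)$, where $N(t)=\Delta X(t)-E[\Delta X(t)\mid X(t)]$ is a martingale difference with conditional variance $\theta^2 s(X(t))+o(\theta^2)$ by Assumption 3, and Taylor-expanding $w$ about $y(t\theta)$ to first order, one obtains the approximately linear recursion $Z_\theta(t+1)=Z_\theta(t)\bigl(1+\theta\,w'(y(t\theta))\bigr)+\sqrt{\theta}\,\xi(t)+o(\sqrt{\theta})$, with $\xi(t)=N(t)/\theta$ having conditional variance tending to $s$. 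This is a discrete Ornstein--Uhlenbeck scheme; I would pass to the limit via a martingale central limit theorem, verifying the Lindeberg condition from the third-moment control of Assumption 4 (which forces the rescaled jumps to be uniformly negligible) and the convergence of the quadratic variation from the Lipschitz continuity of $s$ (Assumption 6). As $t\theta\to\infty$ the linear recursion relaxes to the stationary law of the limiting OU process, whose variance is finite precisely because $w'(y^*)\le 0$, giving the asserted zero-mean normal limit.

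\textbf{Main obstacle.} The principal difficulty is obtaining genuinely \emph{uniform} estimates over the infinite horizon $t\ge 0$ and over all starting points $x\in I$. The naive Gronwall bound on $e(t)$ carries a factor $e^{L t\theta}$, which is useless on the relevant $t\sim 1/\theta$ scale unless the dynamics are stabilising. Converting the local stability condition $w'(y^*)\le 0$ into a uniform global contraction estimate --- typically through a Lyapunov function that controls both the transient approach to a neighbourhood of $y^*$ and the long-run behaviour inside it --- is the technical heart of the argument, and the same stability is what keeps the fluctuation variance bounded as $t\theta\to\infty$ in the CLT. Verifying the Lindeberg condition and the quadratic-variation convergence in the martingale CLT, uniformly in the starting point, is the second delicate point.
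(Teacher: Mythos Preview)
Your proposal sketches a plausible route to proving Norman's theorem from first principles, via a martingale decomposition, discrete Gronwall arguments for the ODE approximation, and a martingale CLT for the fluctuations. However, the paper does \emph{not} prove this statement at all: Theorem~\ref{thm:Norman} is quoted in the Appendix as a classical result from Norman's monograph \cite{Norman1972} (see also \cite{Narendra2012}), purely ``in the interest of completeness,'' and is then invoked as a black box in the proof of the paper's own convergence theorem. There is therefore no proof in the paper to compare your proposal against.

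If you intend to supply an original proof where the paper offers none, your outline is broadly reasonable but would need substantial tightening. In particular, your ``main obstacle'' paragraph correctly identifies the real difficulty: the naive Gronwall bound blows up on the $t\sim 1/\theta$ horizon, and the condition $w'(y^*)\le 0$ (note: non-strict) is borderline for obtaining the contraction you need --- the degenerate case $w'(y^*)=0$ requires separate care that your sketch does not address. You should also be aware that the theorem as stated in the paper is for scalar $X(t)\in I\subset\mathbb{R}$, whereas the paper's application is to a two-dimensional process; the multidimensional extension (which is what is actually used) requires the Jacobian eigenvalue condition rather than a scalar derivative sign, and your sketch would need to be adapted accordingly.
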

\newpage
\section{Experimental results for $S$-type Environments}
\label{sec:appendix_B}

In this section, we present the results of the experiments for the S-type learning
game. We conducted several simulations similar to those presented in Section \ref{sec:simualations_L_RI}. The same instances of the payoff matrices $R$ and $C$ were used, covering the cases referred to in Section \ref{sec:scheme_L_R_I}. 	

For all the experiments conducted for the $S$-LA, 9 trajectories were plotted for 2,000,000 iteration, with $p_{max}=0.99$ and $\theta=0.0001$.  A general observation that we noticed when performing that the experiments is that the $S$-LA converges slower than the the $L_{R-I}$. Therefore, we have doubled the number of iterations to allow the $S$-LA to converge in our experiments.

\subsection{Case 1: Only one mixed Nash equilibrium exists}
\label{sec:SLearning-case1}
Figure \ref{fig:S_model_case1} depicts the situation where the only Nash equilibrium that exists is a mixed one. We can easily observe that the $S$-LA approaches the trajectories of the ODE given in Figure \ref{fig:ODE_Mixed}. Please note that the ODE regardless of the LA type, whether it is $L_{R-I}$ or $S-$LA.

\begin{figure}[htp!]
\centering
\includegraphics[width=8cm]{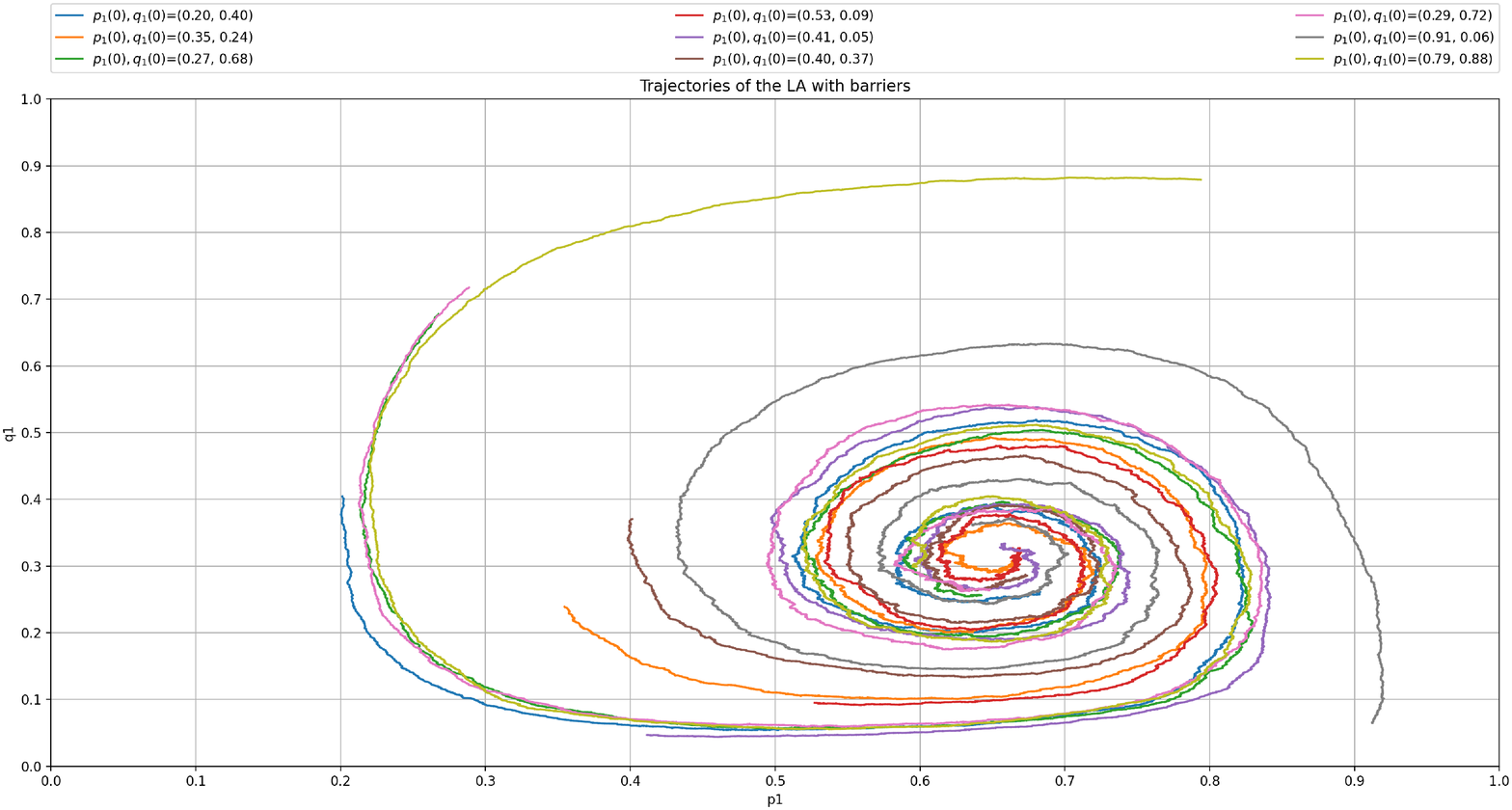}
\caption{Trajectory of $S$-LA using $p_{max}=0.99$ and $\theta=0.0001$ for case 1.}
\label{fig:S_model_case1}
\end{figure}

\subsection{Case 2: One Pure equilibrium}
\label{sec:SLearning-case2}
We also examined the case where the game has a single pure equilibrium. The exhibited behavior is comparable to those reported in Section \ref{sec:simualations_L_RI}. The trajectory of the LA depicted in Figure \ref{fig:S_model_case2} tightly follows the trajectories of the ODE depicted in Figure \ref{fig:ODE_one_pure_099}.
As $\theta$ goes to zero, the trajectories of the LA and those of the ODE will be indistinguishable 
\cite{sastry1994decentralized}.

\begin{figure}[htp!]
\centering
\includegraphics[width=8cm]{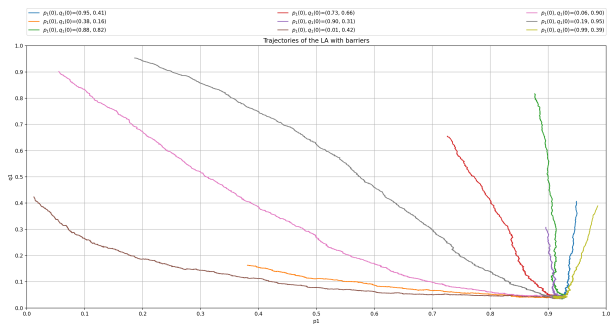}
\caption{Trajectory of $S$-LA using $p_{max}=0.99$ and $\theta=0.0001$ for case 2.}
\label{fig:S_model_case2}
\end{figure}

\subsection{Case 3: Two Pure equilibria and one mixed}
\label{sec:SLearning-case3}
Figure \ref{fig:S_model_case3} shows the situation where there are two pure equilibria and one mixed.

We observe that the LA converges to one of the two pure equilibria that is closest to the starting point. The $S$-LA behaves much similar to the $L_{R-I}$ LA as shown in Figure \ref{fig:Case3_trajectories_LA}. We also observe that the $S$-LA respectively converged to the Nash equilibrium close to (1, 1) and close to (0,0) approximately 50\% of the time.

\begin{figure}[htp!]
\centering
\includegraphics[width=8cm]{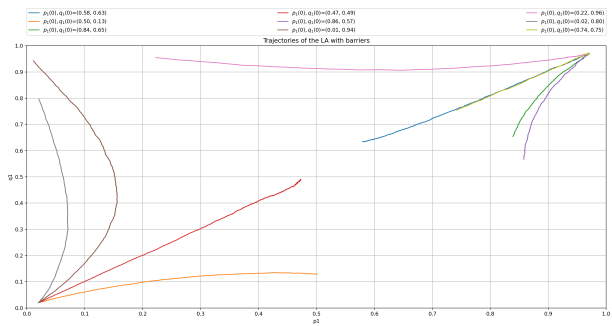}
\caption{Trajectory of $S$-LA using $p_{max}=0.99$ and $\theta=0.0001$ for case 3.}
\label{fig:S_model_case3}
\end{figure}
\end{appendices}
\end{document}